\documentclass[12pt,english]{article}
\usepackage{graphicx}
\usepackage{array}
\usepackage[a4paper]{geometry}
\usepackage[unicode=true,bookmarks=true,bookmarksnumbered=true,bookmarksopen=true,
  bookmarksopenlevel=1,breaklinks=true,pdfborder={0 0 1},
  colorlinks=false,hyperfootnotes=false]{hyperref}
\hypersetup{
  bookmarksnumbered=true,
  colorlinks=true,
  linkcolor=blue,
  citecolor=blue,
  urlcolor=blue,
  menucolor=blue
}
\usepackage[natbibapa]{apacite}
\usepackage{longtable}
\usepackage{booktabs}
\usepackage{ragged2e}
\RaggedRight
\usepackage{parskip}
\usepackage{amsmath,amssymb,amsthm}
\usepackage{times}
\usepackage{setspace}
\usepackage[skip=0pt]{caption}
\usepackage{authblk}
\usepackage{enumitem}
\usepackage{tikz}
\usepackage{xcolor}
\usetikzlibrary{arrows.meta,positioning,shapes.geometric,calc}

\definecolor{plan}{RGB}{70,130,180}
\definecolor{audit}{RGB}{60,160,100}
\definecolor{decide}{RGB}{210,120,40}
\definecolor{improve}{RGB}{160,60,120}
\definecolor{textgray}{RGB}{80,80,80}

\newtheorem{proposition}{Proposition}[section]
\newtheorem{corollary}{Corollary}[section]
\newtheorem{remark}{Remark}[section]
\theoremstyle{definition}
\newtheorem{assumption}{Assumption}[section]
\newtheorem{example}{Example}[section]

\newcommand{\Cov}{\mathrm{Cov}}
\newcommand{\Var}{\mathrm{Var}}
\newcommand{\E}{\mathbb{E}}
\newcommand{\R}{\mathbb{R}}
\newcommand{\plim}{\overset{p}{\to}}
\newcommand{\dlim}{\overset{d}{\to}}

\DeclareMathOperator*{\argmin}{arg\,min}

\numberwithin{equation}{section}

\title{Fairness Testing for Algorithmic Pricing}

\author[1]{Fei Huang\footnote{Correspondence: Fei Huang, feihuang@unsw.edu.au. Giles Hooker, ghooker@wharton.upenn.edu}}
\author[2]{Giles Hooker}
\affil[1]{School of Risk and Actuarial Studies, UNSW Sydney}
\affil[2]{Department of Statistics and Data Science, Wharton School, University of Pennsylvania}
\date{}

\begin{document}
\onehalfspacing
\maketitle

\begin{abstract}
Algorithmic systems now set prices across auto insurance, credit, and lending markets, and regulators increasingly require firms to demonstrate that these systems do not discriminate against protected groups. The standard audit regresses pricing output on a protected attribute and legitimate rating factors, then tests the resulting coefficient using ordinary least squares standard errors. We show that this approach is structurally invalid. Pricing algorithms are usually deterministic, so residuals reflect approximation error rather than sampling variability, rendering classical standard errors invalid in both direction and magnitude. We derive correct asymptotic variance estimators for OLS and GLM audit regressions and the correct cross-covariance formula for proxy discrimination testing. Applied to quoted premiums from 34 Illinois auto insurers, every insurer fails the conditional demographic parity test, with minority zip codes paying $34–$158 more per year than comparable-risk white zip codes. The standard proxy discrimination formula flags zero insurers. However, our corrected formula identifies all 34 as statistically significant, of which 16 exceed the substantive threshold. Our framework provides statistically valid audit tools for any deterministic algorithmic system subject to regression-based fairness testing.
\end{abstract}

\textbf{Keywords:} algorithmic fairness, insurance pricing,
conditional
demographic parity, proxy discrimination, audit protocol

\section{Introduction}
\label{sec:intro}

Auto insurance, mortgage rates, credit limits, and
hiring decisions are increasingly set by algorithms rather
than people. When those algorithms disadvantage minority groups relative to
comparable majority groups, the consequences are concrete.
Affected households pay higher premiums for the same risk,
face higher interest rates for the same creditworthiness,
and receive fewer callbacks for equivalent job applications, prompting regulators to respond with new disclosure and testing requirements. Colorado and New
York State now require firms to demonstrate that
their pricing systems do not discriminate against protected
groups \citep{doradoi-2023-concerning,
nydfs-2024-proposed}, and a growing literature in
information systems documents the reality of such
discrimination in practice \citep{fu-etal-2021-crowds,
hu-etal-2025-human, zhang-xu-2024-fairness,
hurlin-etal-2025-fairness}. What this literature has not
examined is whether the statistical methods used to detect
it are valid.

We show that they are not valid. The standard audit regresses
pricing output on a protected attribute and a set of
legitimate rating factors, then tests whether the coefficient
is zero or within a tolerance band using ordinary least squares (OLS) or generalized linear model (GLM) standard
errors. Those standard errors assume the response is
stochastic. However, pricing algorithms are usually deterministic. The same
profile always returns the same price, so residuals reflect
approximation error rather than sampling variability. The
classical variance formula $\hat\sigma^2(X^\top X)^{-1}$ is
therefore structurally invalid, and its bias is unpredictable
in direction without computing the correct estimator.

These inferential errors have large practical consequences.
As we show in Section~\ref{sec:empirical}, applying the
corrections to quoted premiums from 34 Illinois auto
insurers shifts standard error ratios by as much as a
factor of 2.6 in either direction within the same audit.
The proxy discrimination test is more severely affected
still. Testing for proxy discrimination requires comparing
a coefficient across two regressions on the same
deterministic output vector. Standard practice treats
these regressions as independent, which overstates the
variance of the shift and makes the test conservative by
construction, suppressing the detection of proxy
discrimination that the corrected formula would reveal.

Three further gaps compound these inferential failures,
and they appear in the scholarly literature as prominently
as in regulatory practice. First, information systems (IS) and management science (MS) studies of algorithmic fairness test for disparity using criteria drawn from the machine learning (ML) classification literature (such as independence, separation, or sufficiency) without connecting them to the named regulatory criteria that govern legal compliance in pricing and lending \citep{hurlin-etal-2025-fairness, zhang-xu-2024-fairness, fu-etal-2021-crowds}. Because fairness criteria are mutually incompatible in general \citep{kleinberg-etal-2018-inherent, barocas-etal-2022-fairness}, applying a criterion from the ML literature without first verifying that it aligns with the specific regulatory standard being enforced can produce findings that are irrelevant to legal compliance. 

Second, the fairness testing literature relies on
significance testing to evaluate compliance
\citep{hurlin-etal-2025-fairness,
fuster-etal-2022-predictably}. The likelihood-ratio
tests of \citet{hurlin-etal-2025-fairness} exemplify
this approach. However, significance testing cannot
provide positive evidence of fairness. Failing to reject the null of fairness establishes only that the data do not provide sufficient evidence of a violation, not that the model affirmatively complies with regulatory tolerance. With sufficiently large samples, any non-zero gap will be flagged as significant even when it falls well within acceptable limits. With small samples, genuine violations may go entirely undetected.

Third, fairness testing in both the literature and regulatory proposals treats statistical testing, sample design, and remediation as separate exercises rather than an integrated pre-committed pipeline \citep{doradoi-2023-concerning, nydfs-2024-proposed, hurlin-etal-2025-fairness}. We address this by providing a complete audit protocol in which all design choices are pre-specified, the decision rule is derived from the inferential results in Sections 3 and 4, and the three-outcome rule (Pass, Fail, Insufficient Information) explicitly accounts for the difference between a model that fails and one for which the sample is too small to reach a verdict.
 
This paper addresses the inferential problem and all
three methodological gaps. We derive the correct
asymptotic variance for OLS and GLM estimators applied
to deterministic algorithm outputs
(Propositions~\ref{prop:ols} and~\ref{prop:glm}) and the
correct cross-covariance formula for the coefficient-shift
test used to detect proxy discrimination
(Proposition~\ref{prop:shift}). Beyond the inferential
fix, we connect two fairness criteria --- proxy
discrimination and conditional demographic parity --- to
named statistical estimands and regulatory sources, adopt
equivalence testing (TOST) \citep{schuirmann-1987-comparison}
as the decision framework, and combine these elements into
a complete audit protocol with pre-specified design,
criterion-specific testing, and a three-outcome decision
rule that distinguishes Pass, Fail, and Insufficient
Information.
 
We validate the framework using publicly available quoted
premiums from 34 Illinois auto insurers. As discussed in
Section~\ref{sec:empirical}, the data operate at the
zip-code level and use proxied protected attributes, so
the results serve as an illustration of the framework
rather than a definitive audit of any individual insurer.
Our results show that every insurer fails the conditional demographic parity test.
Minority zip codes pay \$34--\$158 more per year than
comparable-risk white zip codes after controlling for risk
and geography.  The inferential correction also transforms
the proxy discrimination results. The standard formula
flags zero of 34 companies, while the corrected formula
finds 34 statistically significant cases, of which 16
also exceed the substantive shift threshold and are
formally flagged. The
framework applies to any deterministic algorithmic system
subject to regression-based fairness testing, including
credit scoring, mortgage pricing, and hiring screening.
 
Section~\ref{sec:background} reviews the related literature
and positions our contribution. Section~\ref{sec:inference}
derives the correct inferential foundations.
Section~\ref{sec:framework} develops the fairness criteria,
TOST decision framework, and audit protocol.
Section~\ref{sec:empirical} reports the empirical validation.
Section~\ref{sec:discussion} discusses implications and
Section~\ref{sec:conclusion} concludes.
\section{Related Literature}
\label{sec:background}

\subsection{Algorithmic bias and fairness in IS and management science}

\citet{fu-etal-2021-crowds} find evidence of racial and
gender bias in peer-to-peer lending algorithms even when
protected attributes are excluded as inputs, and propose
a debiasing method to address it.
\citet{hu-etal-2025-human} decompose human evaluator bias
in microloan granting into preference-based and
belief-based components and examine how these biases
evolve when ML algorithms are trained on the resulting
decisions. They find that even fairness-unaware algorithms
partially reduce the human bias present in the training
data, though removing the bias from training data improves
fairness further. \citet{fuster-etal-2022-predictably}
find that fintech mortgage lenders produce racial and
ethnic pricing disparities as large as those of
traditional lenders, a result that undercuts the claim
that algorithmic systems reduce discrimination by removing
human discretion. All three studies establish that algorithmic
discrimination is a pressing problem in
the lending and mortgage markets that fairness audits
 and testing are needed to police.

 A separate strand of the management science literature warns that
imposing fairness constraints can backfire. When firms
anticipate impact-parity requirements, they reduce
investment in model accuracy in exactly the settings
where better models would matter most
\citep{fu-aseri-2022-unfair, shimao-etal-2025-strategic}.
This makes valid auditing more consequential. If fairness tests are unreliable, the constraints
they trigger may harm the groups they protect.

\citet{zhang-xu-2024-fairness} document disparate impact
in catastrophic insurance ratemaking and propose a
fair ratemaking solution. \citet{hurlin-etal-2025-fairness}
develop a diagnostic framework for credit scoring that
identifies which variables drive a lack of fairness.
Both papers address the design of the algorithm rather
than the validity of the statistical inference used to
evaluate it. Neither examines whether the tests applied
to algorithm outputs are valid when those outputs are
deterministic. 

\citet{lambrecht-tucker-2019-algorithmic} find that a
gender-neutral ad delivery algorithm systematically
showed STEM career ads to fewer women, because
optimising for cost-effective ad delivery drew on the
higher market price of female views driven by competing
advertisers, a structural feature of the ad auction
rather than any discriminatory intent. \citet{rhue-2023-anchoring}
finds that emotion recognition algorithms exhibit
systematic demographic disparities in scoring, and that
human labellers anchor on these biased scores even when
explicitly informed of the algorithm's fairness
limitations, providing no evidence that disclosure alone
offsets the bias. Together, these results illustrate that
algorithmic bias can arise from market structure and
cognitive mechanisms rather than from flawed design,
and that transparency interventions do not reliably
correct it. What remains essential, therefore, is the ability to detect such disparities through statistically valid audit tools. We show that the regression-based methods currently used to audit deterministic algorithms are structurally invalid, meaning that the disparities these studies document may go undetected or be incorrectly measured even when regulators attempt to test for them.

\subsection{Inference for deterministic responses}

\citet{eicker-1967-limit} and \citet{white-1980-heteroskedasticity} develop the heteroskedasticity-consistent sandwich estimator to correct for non-constant error variance in stochastic regression models. In the deterministic setting we analyze, the sandwich is required for a different reason. The residuals of a linear audit model fitted to algorithm outputs reflect approximation error rather than draws from a distribution, invalidating the classical variance formula regardless of whether the error variance is constant.

\citet{xin2026} identify a related but distinct distortion in regression-based fairness audits. Using a proxied rather than directly observed race variable introduces measurement error into the protected attribute, biasing the estimated disparity coefficient itself. Our paper identifies a compounding inferential problem that operates independently of measurement error. Even when the protected attribute is correctly measured, the standard errors of the regression coefficients are invalid because the response is deterministic rather than stochastic. Both distortions can operate simultaneously in practice, and the corrections we derive remain essential even when race is accurately observed.


This inference problem is related to, but distinct from,
two other non-standard regression settings. In the
generated regressors literature \citep{pagan-1984-econometric,
murphy-topel-1985-estimation}, a first-stage estimate
appears as a regressor in a second-stage equation, and
the variance correction accounts for the estimation error
in the first stage. In our setting the deterministic
function appears as the response, not a regressor, so
the source of non-standard variance is different and
the correction takes a different form. The connection
to White's misspecification-robust inference
\citep{white-1982-maximum} is closer in spirit but still
not exact. White's framework applies to stochastic models
whose distributional assumptions fail, whereas our setting
has no distributional assumption to fail. There is simply
no error term.

The cross-covariance correction for the proxy
discrimination test has particularly large empirical
consequences. The coefficient-shift test compares a coefficient across two regressions, one omitting the protected attribute and one including it, following the logic of the Hausman specification test  \citep{hausman-1978-specification}. Under that framework, estimates from two consistently estimated models are asymptotically uncorrelated, a condition that holds when the response is stochastic. When both models are fitted to the same deterministic response vector, however, this condition fails by construction, and the standard independent-samples variance formula consequently overstates the variance of the shift.

\citet{schuirmann-1987-comparison} introduced equivalence
testing (TOST) in pharmaceutical regulation to demonstrate
bioequivalence rather than merely the absence of a
detectable difference. The logic transfers directly to
fairness auditing. The tolerance thresholds already
embedded in regulatory proposals define an equivalence
region, and TOST places the burden of demonstrating
compliance on the firm rather than the regulator.

\subsection{Fairness criteria and regulatory proposals}

The two fairness criteria we study operationalise
the legal doctrine of disparate impact
\citep{barocas-moritz-2016-big}, which holds that facially
neutral policies producing disproportionate harm to
protected groups are discriminatory even absent the
explicit use of protected attributes. The machine learning
fairness literature has formalised several statistical
criteria under this doctrine and shown that no system can
simultaneously satisfy several of them unless base rates
are equal across groups \citep{kleinberg-etal-2018-inherent,
barocas-etal-2022-fairness, mehrabi-etal-2021-survey}. This incompatibility means that the choice of criterion
embeds a normative commitment, and that practitioners who
test for one criterion while implicitly targeting another
may reach verdicts that conflict when the two criteria
diverge on the same data.
 
We focus on two criteria in this paper, proxy discrimination
(PD) and conditional demographic parity (CDP). These are
motivated by both regulatory proposals and conceptual
coherence. Regulatorily, the test statistics already
specified by Colorado \citep{doradoi-2023-concerning} and
New York \citep{nydfs-2024-proposed} directly
operationalise PD and CDP, even though none of these
proposals names the criterion being tested. Conceptually,
PD corresponds to the indirect discrimination doctrine, that is
testing whether a rating variable acts as a statistical
proxy for the protected attribute. CDP corresponds
to the requirement that premiums be equal across groups
after conditioning on legitimate risk factors.
 
Within insurance and actuarial science, a parallel
literature has formalised fairness criteria for pricing
models. \citet{lindholm2022} develop the concept of
discrimination-free pricing, and \citet{Frees2023}
survey the landscape of fairness criteria as they apply
to actuarial practice. \citet{xin2022} connect machine
learning fairness criteria to antidiscrimination
regulations in insurance pricing, and related approaches
include fairness-constrained ratemaking
\citep{vincent2022fair, araiza2022discrimination,
cote2024fair, HENCKAERTS2022117230}. A critical
limitation of this literature, documented by
\citet{shimao-huang-2024-welfare} and
\citet{shimao-huang-2025-welfare}, is that fairness
constraints applied at the cost modelling stage do not
necessarily produce fair prices in the market. Demand
responses and price optimisation can undo the fairness
properties of the underlying cost model. Our paper
operates at the pricing output stage, by auditing the
premiums that consumers actually face. We show that the standard errors used in these regression-based tests are structurally invalid when the response is a deterministic algorithm output.
 
\citet{pope-sydnor-2011-implementing} develop a framework
for identifying and eliminating proxy discrimination in
statistical profiling models, illustrated using employment
profiling data. Their framework uses the coefficient-shift
approach to detect whether a variable acts as a proxy for
a protected attribute, then proposes a method to remove
the proxy effect while preserving predictive accuracy.
We adopt the detection component of this framework and
derive the correct variance for the coefficient shift
when both regressions are fitted to a deterministic
algorithm output, a correction that is not required in
the stochastic setting Pope and Sydnor analyse.

\section{Statistical Inference for Deterministic Algorithm Audits}
\label{sec:inference}
 
\subsection{Setup and notation}
 
Let $f: \R^q \to \R$ denote a pricing algorithm that maps a
vector of rating variables $z \in \R^q$ to a price $f(z) \in \R$.
We assume $f$ is a fixed, known function. Given $z$, $f(z)$ is
determined without any additional randomness. This models the
typical audit setting in which the auditor submits a profile and
records the returned quote.
 
Let $\mathcal{D}_n = \{(x_i, z_i)\}_{i=1}^n$ denote a sample
of $n$ observations, where $x_i \in \R^p$ is a vector of audit
covariates (rating factors and protected attribute indicators) and
$z_i \in \R^q$ is the pricing profile submitted for observation
$i$. In some applications $z_i = x_i$, but the pricing function
may also use variables that the auditor does not observe. We
allow $x_i$ and $z_i$ to overlap or coincide. The auditor
fits the linear model
\begin{equation}
  f(z_i) = x_i^\top \beta + r_i, \qquad i = 1, \ldots, n,
  \label{eq:linear}
\end{equation}
where $r_i = f(z_i) - x_i^\top \beta$ is the approximation
error, the difference between the true pricing function and its
linear projection onto the audit covariates. Crucially, $r_i$
is a deterministic function of $(x_i, z_i)$, not a random
variable independent of $x_i$. The population target of the
OLS estimator is
\begin{equation}
  \beta^* = \argmin_{\beta \in \R^p}
    \E\!\left[(f(z) - x^\top \beta)^2\right],
  \label{eq:target}
\end{equation}
the linear projection of $f(z)$ onto the span of the audit
covariates. We write $X = (x_1, \ldots, x_n)^\top \in \R^{n \times p}$
for the covariate matrix and $F = (f(z_1), \ldots, f(z_n))^\top
\in \R^n$ for the response vector. The OLS estimator is
$\hat\beta = (X^\top X)^{-1} X^\top F$.
 
\begin{assumption}[Sampling]
\label{ass:sampling}
The observations $\{(x_i, z_i)\}_{i=1}^n$ are i.i.d.\ draws
from a distribution $\mathbb{P}$ on $\R^p \times \R^q$ with
$\E[\|x\|^4] < \infty$ and $\E[|f(z)|^4] < \infty$.
\end{assumption}
 
\begin{assumption}[Identification]
\label{ass:ident}
The matrix $\Sigma_{xx} = \E[x x^\top]$ is positive definite.
\end{assumption}
 
\begin{assumption}[Bounded approximation error]
\label{ass:approx}
$\E\!\left[(f(z) - x^\top\beta^*)^2 \|x\|^2\right] < \infty$.
\end{assumption}
 
Assumption~\ref{ass:sampling} treats the audit profiles as
random draws, appropriate when they are sampled from the
portfolio or a designed test mix. Assumption~\ref{ass:ident}
a standard identifiability condition. Assumption~\ref{ass:approx} is a mild moment
condition that holds whenever $f$ and $x$ have bounded fourth
moments.
 
\subsection{The measurement validity problem: incorrect standard errors}
 
The classical OLS formula uses $\hat\sigma^2(X^\top X)^{-1}$,
where $\hat\sigma^2 = n^{-1}\sum \hat r_i^2$. This formula
is derived under the assumption that $r_i$ is an i.i.d.\ error
term independent of $x_i$ with constant variance. For
deterministic pricing algorithms, neither condition holds.
The approximation errors $r_i$ reflect the nonlinearity of
$f$ relative to the linear audit model. They are deterministic
functions of the covariates, not independent draws from a
distribution.
 
\begin{proposition}[OLS consistency and asymptotic variance]
\label{prop:ols}
Under Assumptions~\ref{ass:sampling}--\ref{ass:approx}, the OLS estimator
$\hat\beta = (X^\top X)^{-1} X^\top F$ satisfies
$\hat\beta \plim \beta^*$ and
\[
  \sqrt{n}(\hat\beta - \beta^*)
    \dlim \mathcal{N}\!\left(0,\, \Sigma_{xx}^{-1}
    \Omega \Sigma_{xx}^{-1}\right),
\]
where $\Omega = \E[x x^\top (f(z) - x^\top \beta^*)^2]$.
A consistent estimator of the asymptotic covariance is
\[
  \widehat{\Cov}(\hat\beta)
    = (X^\top X)^{-1}
      \left(\sum_{i=1}^n x_i x_i^\top \hat r_i^2\right)
      (X^\top X)^{-1},
\]
where $\hat r_i = f(z_i) - x_i^\top\hat\beta$ are the OLS
residuals.
\end{proposition}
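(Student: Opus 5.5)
The argument is the standard misspecification-robust M-estimation argument. It runs on the i.i.d.\ pairs $(x_i,z_i)$, with the population residual $r_i^* = f(z_i) - x_i^\top\beta^*$ playing the role that the error term plays in stochastic regression. First I record the population first-order condition for \eqref{eq:target}: $\E[x(f(z)-x^\top\beta^*)]=0$. This follows from Assumption~\ref{ass:ident}, since then $\beta^* = \Sigma_{xx}^{-1}\E[x f(z)]$ exists and is unique. Hence $\E[x_i r_i^*]=0$, even though $r_i^*$ is a deterministic function of $(x_i,z_i)$ and is not independent of $x_i$. This orthogonality, and not any independence, is what drives the argument.

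Next, write $\hat\beta - \beta^* = (n^{-1}X^\top X)^{-1}\, n^{-1}\sum_i x_i r_i^*$. Assumption~\ref{ass:sampling} gives finite second moments of $x$, so the weak law of large numbers yields $n^{-1}X^\top X \plim \Sigma_{xx}$. Because $\E\|x r^*\| \le (\E\|x\|^2\,\E r^{*2})^{1/2}<\infty$, the weak law also gives $n^{-1}\sum x_i r_i^* \plim 0$. Consistency then follows by the continuous mapping theorem, using invertibility of $\Sigma_{xx}$. For asymptotic normality, the summands $x_i r_i^*$ are i.i.d.\ with mean zero and covariance $\Omega$, and $\Omega$ is finite by Assumption~\ref{ass:approx}. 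The multivariate Lindeberg--L\'evy CLT gives $n^{-1/2}\sum x_i r_i^* \dlim \mathcal N(0,\Omega)$, and Slutsky's lemma delivers the sandwich $\Sigma_{xx}^{-1}\Omega\Sigma_{xx}^{-1}$.

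The main obstacle is the consistency of the variance estimator, specifically showing $n^{-1}\sum x_i x_i^\top \hat r_i^2 \plim \Omega$. I would expand $\hat r_i = r_i^* - x_i^\top(\hat\beta-\beta^*)$, so that
\[
  \tfrac1n\sum x_ix_i^\top \hat r_i^2 = \tfrac1n\sum x_ix_i^\top r_i^{*2} - \tfrac2n\sum x_ix_i^\top r_i^* x_i^\top(\hat\beta-\beta^*) + \tfrac1n\sum x_ix_i^\top \bigl(x_i^\top(\hat\beta-\beta^*)\bigr)^2 .
\]
The first term converges to $\Omega$ by the weak law, since its mean is finite by Assumption~\ref{ass:approx}. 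The third term is bounded in norm by $\|\hat\beta-\beta^*\|^2\, n^{-1}\sum\|x_i\|^4$, which is $o_p(1)\cdot O_p(1)$ because $\E\|x\|^4<\infty$. The cross term is bounded by $2\|\hat\beta-\beta^*\|\, n^{-1}\sum \|x_i\|^3|r_i^*|$. Here I need $\E[\|x\|^3|r^*|]<\infty$. By Hölder this follows from $\E\|x\|^4<\infty$ together with $\E|r^*|^4<\infty$, and the latter holds by Minkowski because $\E|f(z)|^4<\infty$ and $\E\|x\|^4<\infty$. This is exactly where the fourth-moment parts of Assumption~\ref{ass:sampling} are used.

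Finally, note that $\widehat{\Cov}(\hat\beta)$ as displayed equals $n^{-1}$ times $(n^{-1}X^\top X)^{-1}\bigl(n^{-1}\sum x_ix_i^\top\hat r_i^2\bigr)(n^{-1}X^\top X)^{-1}$. It is therefore consistent for the asymptotic covariance of $\hat\beta$ in the sense that $n\,\widehat{\Cov}(\hat\beta)\plim\Sigma_{xx}^{-1}\Omega\Sigma_{xx}^{-1}$, by the continuous mapping theorem.
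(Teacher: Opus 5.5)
Your proposal is correct and follows the same route as the paper's proof: the first-order condition $\E[xr]=0$, the decomposition $\hat\beta-\beta^* = (n^{-1}X^\top X)^{-1} n^{-1}\sum_i x_i r_i$, the law of large numbers, the multivariate CLT, and Slutsky. Your HC0 step is a more careful version of the paper's brief appeal to $\hat r_i \plim r_i$ plus the continuous mapping theorem: you expand $\hat r_i^2$, bound the cross and quadratic terms using $\E\|x\|^4<\infty$ and $\E[\|x\|^3|r^*|]<\infty$, and state the $n$-scaling $n\,\widehat{\Cov}(\hat\beta)\plim\Sigma_{xx}^{-1}\Omega\Sigma_{xx}^{-1}$ explicitly.
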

\noindent\textit{Proof.} See Appendix~\ref{app:proofs}, Section~\ref{proof:ols}. \hfill$\square$
 
This is the HC0 heteroskedasticity-consistent estimator of
\citet{white-1980-heteroskedasticity}, but the justification
in the deterministic setting is different from the usual
heteroskedasticity rationale. The residuals $\hat{r}_i$ reflect
the nonlinearity of $f$ relative to the linear audit model,
not the variance of a stochastic error term. The finite-sample 
corrections HC1 through HC3 of \citet{mackinnon-white-1985} 
are consistent estimators of the same asymptotic sandwich 
variance in the deterministic setting and may be used in place 
of HC0. Specifically, HC3 inflates each squared residual by 
the leverage correction $(1-h_{ii})^{-2}$, giving
\begin{equation}
    \widehat{\operatorname{Cov}}_{\mathrm{HC3}}(\hat{\beta}) 
    = (X^{\top}X)^{-1}
    \left(
        \sum_{i=1}^{n} x_i x_i^{\top} 
        \frac{\hat{r}_i^2}{(1-h_{ii})^2}
    \right)
    (X^{\top}X)^{-1},
    \label{eq:hc3}
\end{equation}
where $h_{ii} = x_i^{\top}(X^{\top}X)^{-1}x_i$ is the 
leverage of observation $i$. Under Assumption~3.2, 
$h_{ii} \in (0,1)$ for all $i$ in sufficiently large 
samples, ensuring the HC3 correction is well-defined. 
HC3 provides better finite-sample performance than HC0 
by down-weighting high-leverage observations, a property 
that is desirable when audit profiles are not uniformly 
distributed across the covariate space. We use HC3 
throughout, following standard practice.
 
The classical formula is valid if and only if the approximation
errors are uncorrelated with the covariates and have constant
variance. For the algorithmic pricing systems targeted by
modern fairness regulation, this condition is unlikely to hold.
 
\begin{example}[Territorial loading]
\label{ex:territorial}
Suppose $f(z) = \exp(\lambda^\top z)$ where $z$ includes
individual rating factors and territory indicators, and
$\lambda$ partitions conformably into territory and
individual rating coefficients. If minority zip codes are
concentrated in certain territories, then the approximation
error $r_i = f(z_i) - x_i^\top\beta^*$ is correlated with
the minority indicator $A_i$, so $\Cov(A_i, r_i^2) \neq 0$
and the classical formula is incorrect. The sandwich
estimator correctly accounts for this.
\end{example}
 
\begin{corollary}
\label{cor:agree}
The classical OLS covariance formula is valid if and only if
$\Cov(x, r^2) = 0$, that is, the second moment of the
approximation error is uncorrelated with the covariates. A
sufficient condition is that $f(z) = x^\top\beta_0 + \varepsilon$
for some $\beta_0$ and some error $\varepsilon$ independent of
$x$, which is the case when the linear audit model is correctly
specified.
\end{corollary}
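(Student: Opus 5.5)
We compare the probability limit of the classical estimator with the sandwich limit from Proposition~\ref{prop:ols}. Write $\sigma^2 = \E[r^2]$ with $r = f(z) - x^\top\beta^*$. Because $\hat\beta \plim \beta^*$ and Assumption~\ref{ass:sampling} gives finite fourth moments, a law of large numbers argument should yield $\hat\sigma^2 \plim \sigma^2$. Combined with $n^{-1}X^\top X \plim \Sigma_{xx}$, this gives
\[
n\,\hat\sigma^2 (X^\top X)^{-1} \plim \sigma^2 \Sigma_{xx}^{-1}.
\]
Proposition~\ref{prop:ols} gives the true asymptotic covariance $\Sigma_{xx}^{-1}\Omega\Sigma_{xx}^{-1}$. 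Since $\Sigma_{xx}$ is invertible (Assumption~\ref{ass:ident}), the classical formula is asymptotically valid if and only if
\[
\Omega = \sigma^2 \Sigma_{xx}, \quad\text{that is,}\quad \E[xx^\top r^2] = \E[r^2]\,\E[xx^\top].
\]
This is exactly the statement that $\Cov(x_jx_k, r^2) = 0$ for all $j,k$.

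I will therefore interpret ``$\Cov(x,r^2)=0$'' in the corollary as zero covariance between $r^2$ and every entry of $xx^\top$. When $x$ contains an intercept, this includes the first-order case $\Cov(x_j, r^2) = 0$, which is the condition illustrated in Example~\ref{ex:territorial}. I will make this reading explicit at the start of the proof. The literal vector condition $\Cov(x,r^2)=0$ is necessary but not sufficient, so the ``if and only if'' only holds in the $xx^\top$ form. Settling this interpretation is the main (definitional) obstacle; everything else is routine.

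For the sufficient condition, suppose $f(z) = x^\top\beta_0 + \varepsilon$ with $\varepsilon$ independent of $x$. I will treat $\varepsilon$ as mean zero, or absorb its mean into the intercept. Then $\E[x\varepsilon] = \E[x]\E[\varepsilon] = 0$, so $\beta_0$ satisfies the normal equations $\E[x(f(z) - x^\top\beta)] = 0$. By Assumption~\ref{ass:ident}, the solution is unique, so $\beta^* = \beta_0$ and $r = \varepsilon$.

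Independence then gives $\E[xx^\top \varepsilon^2] = \E[xx^\top]\,\E[\varepsilon^2]$, so $\Omega = \sigma^2\Sigma_{xx}$ and the classical formula is valid. The only caveat is the non-deterministic $\varepsilon$: I will note that this case lies outside the deterministic model, unless $\varepsilon$ is viewed as a function of components of $z$ that are independent of $x$. This is consistent with the paper's allowance that $z$ may contain variables the auditor does not observe.
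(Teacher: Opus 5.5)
Your argument is correct and follows the route the paper has in mind. The paper gives no separate proof: the corollary is meant to be read off Proposition~\ref{prop:ols} by comparing the limit $\sigma^2\Sigma_{xx}^{-1}$ of $n\hat\sigma^2(X^\top X)^{-1}$ with $\Sigma_{xx}^{-1}\Omega\Sigma_{xx}^{-1}$, exactly as you do.

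Your reinterpretation is not merely definitional; it is needed, because the ``if'' direction is false for the literal vector condition. Take $x=(1,w)^\top$ with $w\sim\mathcal{N}(0,1)$ and $f(z)=w^2$. Then $\Sigma_{xx}=I$, $\beta^*=(1,0)^\top$, $r=w^2-1$, and $\Cov(x,r^2)=0$ by symmetry. Yet $\E[w^2r^2]=10$ while $\E[r^2]\,\E[w^2]=2$, so the classical variance of the slope is too small by a factor of five. The correct condition is your $\Cov(x_jx_k,r^2)=0$ for all $j,k$ (White's condition). Its first row recovers the stated condition only when $x$ contains an intercept, as you note.

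Your caveats on the sufficient condition are also needed. First, $\varepsilon$ must have mean zero or be absorbed by an intercept. Second, the closing phrase ``correctly specified'' must be read as including independence of $\varepsilon$. A correct conditional mean $\E[f(z)\mid x]=x^\top\beta_0$ alone still allows $\Cov(xx^\top,\varepsilon^2)\neq 0$.
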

 
For the algorithmic pricing systems targeted by modern fairness
regulation, correct specification is unlikely. Territory
relativities, credit tiers, and vehicle classification
structures all create approximation errors that are correlated
with the protected attributes being tested.
 
\subsection{Extension to GLMs}
 
Pricing audits sometimes use a GLM, for example a 
log-linear model for right-skewed insurance premiums. 
Standard GLM theory gives
\begin{equation}
    \widehat{\operatorname{Cov}}(\hat{\beta}_{\mathrm{GLM}}) 
    = (X^{\top}\hat{\Lambda}X)^{-1},
    \label{eq:glm-cov}
\end{equation}
where $\hat{\Lambda} = \operatorname{diag}(\hat{\lambda}_1, 
\ldots, \hat{\lambda}_n)$ is the diagonal matrix of 
estimated GLM working weights, with
\begin{equation}
    \hat{\lambda}_i = \frac{1}{V(\hat{\mu}_i)\cdot 
    \{g'(\hat{\mu}_i)\}^2},
    \label{eq:glm-weights}
\end{equation}
$V(\hat{\mu}_i)$ the variance function of the posited 
exponential family evaluated at the fitted mean 
$\hat{\mu}_i = g^{-1}(x_i^\top\hat{\beta}_{\mathrm{GLM}})$, 
and $g'(\hat{\mu}_i)$ the derivative of the link 
function $g(\cdot)$ evaluated at $\hat{\mu}_i$. This 
formula is derived under the assumption that the 
response $y_i$ is a genuine draw from the posited 
exponential family, so that the expected Hessian and 
the outer product of scores are equal. When $f(z_i)$ 
is a deterministic algorithm output rather than a 
genuine draw, this equality fails and 
$(X^{\top}\hat{\Lambda}X)^{-1}$ is no longer a valid 
variance estimator, as we show in 
Proposition~\ref{prop:glm}.
 
\begin{proposition}[GLM consistency and asymptotic variance]
\label{prop:glm}
Under regularity conditions analogous to
Assumptions~\ref{ass:sampling}--\ref{ass:approx} and standard
GLM smoothness conditions, the GLM estimator is consistent for
\[
  \beta^* = \argmin_\beta \E\!\left[-\log p(f(z) \mid x^\top\beta)\right],
\]
and satisfies $\sqrt{n}(\hat\beta^{\mathrm{GLM}} - \beta^*)
\dlim \mathcal{N}(0, J^{-1} M J^{-1})$, where
\[
  J = \E\!\left[x x^\top
    \left.\frac{-\partial^2}{\partial\eta^2}
    \log p(f(z) \mid \eta)\right|_{\eta = x^\top\beta^*}\right]
\]
and
\[
  M = \Cov\!\left(x\left.
    \frac{\partial}{\partial\eta}
    \log p(f(z) \mid \eta)\right|_{\eta = x^\top\beta^*}\right).
\]
In the standard GLM where $y_i$ is a  draw from
$p(\cdot \mid \eta_i)$, the information identity gives $J = M$
and the sandwich reduces to $(X^\top \hat W X)^{-1}$. When
$f(z_i)$ is deterministic, $J \neq M$ in general and the
standard formula is incorrect.
\end{proposition}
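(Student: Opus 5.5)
We treat the GLM estimator as an M-estimator (quasi-maximum likelihood) whose objective is the sample average of $\ell(\beta; x_i, z_i) = -\log p(f(z_i)\mid x_i^\top\beta)$. Because $f$ is a fixed function, each pair $(x_i, f(z_i))$ is itself an i.i.d.\ draw from the distribution induced by $\mathbb{P}$ under Assumption~\ref{ass:sampling}. Standard M-estimation theory, in the misspecified-likelihood form of \citet{white-1982-maximum}, therefore applies verbatim. The determinism enters only because the "response" is a measurable function of the covariates, and that is harmless for the limit theorems. The proof mirrors that of Proposition~\ref{prop:ols}, with the quadratic loss replaced by the negative log-likelihood.

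\textbf{Consistency.} We assume, as part of the "standard GLM smoothness conditions," three things. First, $\beta^*$ is the unique minimiser of $Q(\beta)=\E[\ell(\beta;x,z)]$. Second, $\beta$ ranges over a compact set, or $\ell$ is convex in $\beta$, as it is for canonical links. Third, $\ell$ is continuous in $\beta$ with an integrable envelope; the fourth-moment conditions of Assumption~\ref{ass:sampling} supply this envelope for the usual families. A uniform law of large numbers then gives $\sup_\beta |Q_n(\beta)-Q(\beta)|\plim 0$, and the argmin theorem yields $\hat\beta^{\mathrm{GLM}}\plim\beta^*$. In the convex case, pointwise convergence plus convexity suffices, which avoids the compactness assumption. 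Identification plays the role of Assumption~\ref{ass:ident}: $J$ must be positive definite. For canonical links this holds when $\E[xx^\top b''(x^\top\beta^*)]>0$, where $b$ is the cumulant function of the posited family.

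\textbf{Asymptotic normality.} Write $s(\eta)=\partial_\eta\log p(f(z)\mid\eta)$. The score equation is $n^{-1}\sum_i x_i s(x_i^\top\hat\beta)=0$. A mean-value expansion around $\beta^*$ gives
\[
0 = n^{-1/2}\sum_i x_i s_i(\beta^*) - \Bigl(n^{-1}\sum_i x_i x_i^\top \bigl(-\partial_\eta^2\log p\bigr)\big|_{\bar\eta_i}\Bigr)\sqrt n(\hat\beta-\beta^*).
\]
The first-order condition for $\beta^*$ gives $\E[x\,s(x^\top\beta^*)]=0$, so the summands $x_i s_i(\beta^*)$ are i.i.d.\ with mean zero and covariance $M$. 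The Lindeberg--L\'evy CLT then gives $\mathcal N(0,M)$, provided $\E\|x s\|^2<\infty$, which is the analogue of Assumption~\ref{ass:approx}. The Hessian term converges to $J$ by a uniform LLN on a neighbourhood of $\beta^*$ together with consistency. Slutsky's lemma then delivers $\mathcal N(0,J^{-1}MJ^{-1})$.

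\textbf{The information identity.} For the final claim, note that if $y\sim p(\cdot\mid x^\top\beta^*)$ conditionally on $x$, the conditional information identity $\E[s^2\mid x]=\E[-\partial^2_\eta\log p\mid x]$ gives $J=M$. Since $\E[-\partial^2_\eta\log p\mid x]$ equals the working weight $\lambda(x^\top\beta^*)$, the plug-in estimate of $J$ is $n^{-1}X^\top\hat W X$, and the covariance reduces to $(X^\top\hat WX)^{-1}$. For deterministic $f$, the conditional law of $f(z)$ given $x$ is whatever the sampling design induces, and may be degenerate when $z=x$. In that case $\E[s^2\mid x]$ is just $s(x^\top\beta^*)^2$ evaluated at $f(x)$, which bears no relation to the curvature.

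To show that $J\neq M$ "in general," it suffices to give an example. Take the Poisson log-link with $z=x$. Then $s=f(x)-e^{x^\top\beta^*}$, so $M=\E[xx^\top(f-\mu)^2]$, while $J=\E[xx^\top\mu]$. These coincide only if $(f-\mu)^2$ and $\mu$ agree in every $xx^\top$-weighted moment, which fails, for instance, whenever $f$ is exactly log-linear (then $M=0\neq J$). Consistency of the plug-in sandwich follows as in Proposition~\ref{prop:ols}.

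\textbf{Main obstacle.} The hardest step is making the "standard smoothness conditions" concrete enough to justify uniform convergence of the Hessian, since dominance conditions depend on the family. I would state them as assumptions: a local envelope for $\|x\|^2\sup_{\beta\in N}|\partial^2_\eta\log p|$ and uniqueness of $\beta^*$. I would not derive these from moment conditions.
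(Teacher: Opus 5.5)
Your proposal is correct and follows essentially the same route as the paper's proof. The paper simply invokes a standard M-estimator (quasi-likelihood) argument and notes that the information equality fails because $f(z_i)$ is not a draw from the posited family; you make the regularity conditions explicit and add a concrete Poisson log-link example ($M=0\neq J$ when $f$ is exactly log-linear) that backs up the ``$J\neq M$ in general'' claim the paper only asserts. One small caution: for log-link families (Poisson, Gamma) with unbounded $x$, fourth moments do not give an integrable envelope, because $\sup_{\beta}e^{\pm x^\top\beta}$ needs exponential-type moments, so state that condition as an assumption, as you already propose in your final paragraph.
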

\noindent\textit{Proof.} See Appendix~\ref{app:proofs}, Section~\ref{proof:glm}. \hfill$\square$
 
\subsection{The cross-covariance problem in proxy discrimination}
 
The proxy discrimination test compares a coefficient $\hat\phi$
on a suspected proxy variable across two regression models, a
restricted model omitting the protected attribute $A$ and an
extended model including it. Denoting the covariate matrices as
$X \in \R^{n \times p}$ (restricted) and $\tilde X \in \R^{n
\times (p+1)}$ (extended), the two estimates of interest are
\begin{equation}
  \hat\phi = e_j^\top (X^\top X)^{-1} X^\top F,
  \qquad
  \hat\phi' = e_k^\top (\tilde X^\top \tilde X)^{-1}
              \tilde X^\top F,
  \label{eq:two-estimates}
\end{equation}
where $e_j$ and $e_k$ select the coefficient on the proxy
variable in each model. Writing $a^\top = e_j^\top
(X^\top X)^{-1} X^\top$ and $\tilde a^\top = e_k^\top
(\tilde X^\top \tilde X)^{-1} \tilde X^\top$, we have
\begin{equation}
  \hat\phi - \hat\phi' = (a - \tilde a)^\top F.
  \label{eq:diff-linear}
\end{equation}
Standard practice treats the two estimates as independent and
sums their variances, following the logic of the Hausman specification test \citep{hausman-1978-specification}. When both models are fitted to the same
deterministic response vector $F = f(Z)$, however, $(a^\top F,
\tilde a^\top F)$ are correlated through the shared response
and the independence assumption fails.
 
\begin{proposition}[Variance of the coefficient shift]
\label{prop:shift}
Let $X$ be the restricted covariate matrix and $\tilde X$
the extended matrix. Under Assumptions~\ref{ass:sampling}--\ref{ass:approx}, the asymptotic
variance of $\sqrt{n}(\hat\phi - \hat\phi')$ is
\begin{align}
  \Var(\hat\phi - \hat\phi')
    &= (X^\top X)^{-1} \Cov(X_j\, f(Z))(X^\top X)^{-1}
     \notag \\
    &\quad + (\tilde X^\top \tilde X)^{-1}
       \Cov(\tilde X_k\, f(Z))(\tilde X^\top \tilde X)^{-1}
     \notag \\
    &\quad - 2\,(X^\top X)^{-1}
       \Cov(X_j\, f(Z),\, \tilde X_k\, f(Z))
       (\tilde X^\top \tilde X)^{-1}.
  \label{eq:var-shift}
\end{align}
The cross-covariance term is non-zero because both models
are evaluated at the same $f(z_i)$, making their coefficient
estimates positively correlated. The independent-samples
formula omits this term and overstates the variance,
rendering the proxy discrimination test systematically
conservative. All three covariance matrices can be estimated
directly from the data:
\begin{align*}
  \widehat{\Cov}(X_j\, f(Z))
    &= \frac{1}{n}\sum_{i=1}^n x_{ij}^2 f(z_i)^2
       - \left(\frac{1}{n}\sum_{i=1}^n x_{ij} f(z_i)\right)^2, \\
  \widehat{\Cov}(X_j f(Z),\, \tilde X_k f(Z))
    &= \frac{1}{n}\sum_{i=1}^n x_{ij}\tilde x_{ik} f(z_i)^2
       - \left(\frac{1}{n}\sum x_{ij} f(z_i)\right)
         \left(\frac{1}{n}\sum \tilde x_{ik} f(z_i)\right).
\end{align*}
\end{proposition}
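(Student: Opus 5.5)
Since $\hat\phi-\hat\phi'=(a-\tilde a)^\top F$ is linear in the shared response vector, the plan is to derive a joint asymptotic expansion of $(\hat\phi,\hat\phi')$ from the same influence-function argument as Proposition~\ref{prop:ols}, and then take the variance of the difference. Write $\beta^*$ and $\tilde\beta^*$ for the population projections (\ref{eq:target}) of $f(z)$ onto $x$ and onto $\tilde x=(x^\top,A)^\top$. Assumption~\ref{ass:ident} gives $\Sigma_{xx}$ positive definite; I will also need $\tilde\Sigma=\E[\tilde x\tilde x^\top]$ positive definite, i.e.\ $A$ is not collinear with $x$. The OLS expansion used for Proposition~\ref{prop:ols} gives
\begin{align*}
  \sqrt n(\hat\beta-\beta^*)&=\Sigma_{xx}^{-1}\frac{1}{\sqrt n}\sum_{i=1}^n x_i r_i+o_p(1),\\
  \sqrt n(\hat{\tilde\beta}-\tilde\beta^*)&=\tilde\Sigma^{-1}\frac{1}{\sqrt n}\sum_{i=1}^n \tilde x_i \tilde r_i+o_p(1),
\end{align*}
with $r_i=f(z_i)-x_i^\top\beta^*$ and $\tilde r_i=f(z_i)-\tilde x_i^\top\tilde\beta^*$. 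The key point is that both sums are over the same i.i.d.\ draws $(x_i,z_i)$. I will therefore stack them into a single $(2p+1)$-vector of mean-zero i.i.d.\ summands. Its second moments are finite by Assumptions~\ref{ass:sampling} and~\ref{ass:approx}, so the multivariate CLT gives joint normality.

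The asymptotic covariance of the stacked vector has three blocks: $\Omega=\E[xx^\top r^2]$, $\tilde\Omega=\E[\tilde x\tilde x^\top\tilde r^2]$, and the cross block $\Omega_{\times}=\E[x\tilde x^\top r\tilde r]$. Selecting the proxy coordinates with $e_j$ and $e_k$ and applying the delta method (here just a linear map) yields
\[
V=e_j^\top\Sigma_{xx}^{-1}\Omega\Sigma_{xx}^{-1}e_j+e_k^\top\tilde\Sigma^{-1}\tilde\Omega\tilde\Sigma^{-1}e_k-2\,e_j^\top\Sigma_{xx}^{-1}\Omega_\times\tilde\Sigma^{-1}e_k .
\]
I will then match this to (\ref{eq:var-shift}). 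Replacing $\Sigma_{xx}$ by $n^{-1}X^\top X$, and likewise for $\tilde\Sigma$, reproduces the displayed scaling. The middle covariance terms are the covariances of the score contributions $x_i r_i$ and $\tilde x_i\tilde r_i$. I read the paper's notation $\Cov(X_j f(Z))$ as shorthand for these, with $f(Z)$ centered by the fitted projection. The plug-in versions are then consistent by the weak law of large numbers and Slutsky, using $\hat\beta\plim\beta^*$ and the fourth-moment conditions to control the cross terms that arise from replacing $\beta^*$ by $\hat\beta$.

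\textbf{Main obstacle.} The hard step is justifying the exact form of the displayed covariance terms. As written with raw $f(Z)$, they are the covariances of $x_{ij}f(z_i)$, which equal the score covariances only after the projection components are accounted for. Moreover, $e_j^\top\Sigma^{-1}$ mixes all coordinates, not only column $j$. I would first try to argue that the right object is the influence function $\psi_i=e_j^\top\Sigma_{xx}^{-1}x_ir_i-e_k^\top\tilde\Sigma^{-1}\tilde x_i\tilde r_i$, so that $V=\E[\psi^2]$. I would then present (\ref{eq:var-shift}) as the coordinatewise expression of $\E[\psi^2]$, with the estimator built from $\hat r_i$ and $\hat{\tilde r}_i$.

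For the sign claim, I would note that $\Omega_\times$ is typically positive in the proxy coordinates because $r$ and $\tilde r$ share the component of $f(z)$ orthogonal to $\tilde x$; indeed $r=\tilde r+(\text{projection of }A\text{ part})$. Omitting the cross term therefore inflates the variance. I would state positivity as holding under this condition rather than universally. A clean way to see it is $\E[\psi^2]\le(\sqrt{V_1}+\sqrt{V_2})^2$, with the independent-samples formula $V_1+V_2$ exceeding $V$ whenever the cross term is positive.
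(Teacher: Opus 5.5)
Your derivation of the asymptotic variance is correct, and it takes a genuinely different route from the paper. The paper's proof works conditionally on the design. It treats $a^\top=e_j^\top(X^\top X)^{-1}X^\top$ and $\tilde a^\top$ as fixed weights and expands $\Var((a-\tilde a)^\top F)$ as a quadratic form in $\Cov(F)$. It then identifies $X^\top\Cov(F)X$ with $\Cov(Xf(Z))$, which it estimates by sample moments of $x_{ij}f(z_i)$. You instead linearize both OLS fits unconditionally, stack the scores $(x_ir_i,\tilde x_i\tilde r_i)$ computed from the same draws, and apply one multivariate CLT. This gives joint normality and $V=\E[\psi^2]$, with residual-based blocks $\Omega,\tilde\Omega,\Omega_\times$. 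You are right that positive definiteness of $\E[\tilde x\tilde x^\top]$ is needed. The moment conditions for the extended fit follow from Assumption~\ref{ass:sampling}, since $A$ is bounded. The paper's route is shorter and makes the shared-response intuition visible, but it is not coherent as a sampling argument. When $z_i=x_i$, a deterministic $f$ has $\Cov(F\mid X)=0$, so all sampling variability comes from drawing the profiles, and unconditionally $a,\tilde a$ are random. That is exactly why your unconditional expansion is the right tool.

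The step you call the main obstacle cannot be closed by reading the notation charitably. The display with raw $f(Z)$, and the stated plug-in estimators (sample variances and covariances of $x_{ij}f(z_i)$ and $\tilde x_{ik}f(z_i)$), do not give the asymptotic variance. Here is a counterexample:
\begin{itemize}
\item Take $x=(1,W)$, $A$ binary with $\Cov(W,A)\neq 0$, and $f(z)=\mu+\phi W$ exactly.
\item In every sample with full-rank design, the restricted fit returns $(\mu,\phi)$ and the extended fit returns $(\mu,\phi,0)$. So $\hat\phi-\hat\phi'\equiv 0$, and your $V=0$ because $r=\tilde r=0$. Indeed $\hat\phi-\hat\phi'=\hat\kappa\,\hat\pi_j$ exactly, where $\hat\pi$ comes from regressing $A$ on $X$, and $\hat\kappa\equiv 0$ here.
\item But all three covariances in the display equal $\Var(Wf(Z))>0$.
\item Up to the factor $n$, and in the reading used in the paper's proof, the display equals $\Var\{f(z)(e_j^\top\Sigma_{xx}^{-1}x-e_k^\top\tilde\Sigma^{-1}\tilde x)\}$. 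This is generically positive, because the second factor depends on $A$.
\end{itemize}

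The missing piece is precisely what your expansion retains. The randomness of $(X^\top X)^{-1}$ contributes $-\Sigma_{xx}^{-1}(xx^\top-\Sigma_{xx})\beta^*$ to the influence function, and that term is what converts $xf(z)$ into $xr$. Also, because $e_j^\top\Sigma_{xx}^{-1}$ mixes coordinates, the full meat matrices are needed, not just the column-$j$ variance.

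So state what you prove as a corrected proposition, not as a reading of the display. It has $V$ as you wrote it, with estimator $n^{-1}\sum_i\hat\psi_i^2$, where $\hat\psi_i=e_j^\top(n^{-1}X^\top X)^{-1}x_i\hat r_i-e_k^\top(n^{-1}\tilde X^\top\tilde X)^{-1}\tilde x_i\hat{\tilde r}_i$.

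Similarly, the claims that the cross term is positive and that the independent-samples test is systematically conservative do not follow from Assumptions~\ref{ass:sampling}--\ref{ass:approx}; the paper's own Remark concedes this. Keep those claims conditional, as you propose. Drop the bound $\E[\psi^2]\le(\sqrt{V_1}+\sqrt{V_2})^2$, which says nothing about whether $V\le V_1+V_2$.
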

\noindent\textit{Proof.} See Appendix~\ref{app:proofs}, Section~\ref{proof:shift}. \hfill$\square$
 
\begin{remark}
The sign of the cross-covariance term depends on
$\Cov(X_j f(Z),\, \tilde X_k f(Z))$. When this is positive,
which is the typical case when the extended model adds only
the protected attribute $A$ to the restricted model, the
independent-samples formula over-estimates the variance of
the shift and the test is conservative. When it is negative,
the test is anti-conservative. Computing the full formula is
therefore essential for reliable inference in either direction.
\end{remark}
 
In practice, two audits of the
same algorithm are not independent measurements. When both
models share a deterministic response, treating them as
independent inflates the standard error of the shift, making
the test less likely to detect proxy discrimination. We show
in Section~\ref{sec:empirical} that this effect reclassifies
17 of 34 Illinois insurers from non-significant to significant.
 
\subsection{When does the correction matter?}
\label{subsec:correction-magnitude}
 
The practical importance of the sandwich correction depends
on how well the linear audit model approximates the true
pricing function. A useful diagnostic is the ratio
\begin{equation}
  \rho_j = \frac{
    [\widehat{\Cov}_{\mathrm{sandwich}}(\hat\beta)]_{jj}^{1/2}
  }{
    [\widehat{\Cov}_{\mathrm{classical}}(\hat\beta)]_{jj}^{1/2}
  },
  \label{eq:rho}
\end{equation}
the ratio of sandwich to classical standard error for the
$j$-th coefficient. When $\rho_j \approx 1$, the pricing
function is well approximated by the linear model and the
classical formula is adequate. Departures from 1 in either
direction signal that the linear model is misspecified relative
to $f$ and that the classical variance should not be used.
 
Three scenarios determine the direction and magnitude of the
correction. When the pricing function is well-approximated
linearly, $r_i$ is small and mean-independent of $x_i$, so
$\rho_j \approx 1$ and the correction is negligible. When the
function is nonlinear but the approximation error is
mean-independent of the covariates, $\hat\beta$ remains
unbiased for $\beta^*$ and the sandwich estimator is
consistent. The classical formula over- or under-estimates
depending on the sign of $\Cov(x_j, r^2)$, which cannot be
determined without computing both estimators. When the
approximation error is correlated with the covariates,
$\hat\beta$ remains consistent for $\beta^*$ (the best linear
approximation) but the classical standard error is incorrect
in an unpredictable direction. This third scenario is the
empirically relevant case for insurance pricing algorithms.
As Section~\ref{sec:empirical} documents, territory
relativities, credit tiers, and other non-linear pricing
components typically correlate with protected attributes,
making the correction both necessary and consequential.
 
\section{A Fairness Audit Framework}
\label{sec:framework}
 
\subsection{Fairness criteria as testable hypotheses} \label{sec:FairnessHypo}
 
The inferential results in Section~\ref{sec:inference}
establish which standard error estimators are valid for
regression-based fairness tests on deterministic algorithm
outputs. This section builds on those foundations to
formalise the fairness criteria, specify the decision
framework, and assemble the complete audit protocol.
Existing regulatory proposals specify test statistics
without naming the fairness criterion they operationalise.
We formalise two criteria, proxy discrimination and conditional demographic parity, that are directly tested by the
regulatory proposals and tractable with quoted-premium
data. Each connects a test statistic to a named fairness
concept and a statistical estimand, using the corrected
inference of Propositions~\ref{prop:ols}--\ref{prop:shift}.
 
Throughout this section, $P$ denotes the pricing outcome
for a single observation, that is $P_i = f(z_i)$, and
$F = (P_1, \ldots, P_n)^\top$ denotes the stacked response
vector used in the matrix algebra of Section~3. $A$ denotes the binary
protected attribute, and $X_\ell$ denotes the vector
of approved legitimate rating factors.
 
\paragraph{Proxy Discrimination (PD).}
Following \citet{doradoi-2023-concerning}, 
\citet{pope-sydnor-2011-implementing}, and \citet{prince-schwarcz-2020-proxy}, the PD criterion
asks whether a variable $W_j$ acts as a statistical
substitute for the protected attribute. Please refer to Appendix~\ref{app:pd} for details of the criterion and its alignment with the regulation.

The two regression
models underlying the test are the restricted model,
\begin{equation}
  P = \mu + \phi_j W_j + \gamma^\top X_\ell + \varepsilon,
  \label{eq:pd-restricted}
\end{equation}
which omits the protected attribute $A$, and the extended
model,
\begin{equation}
  P = \mu' + \phi'_j W_j + \kappa A + \gamma'^\top X_\ell
      + \varepsilon',
  \label{eq:pd-extended}
\end{equation}
which includes it, and where $\kappa$ is the coefficient 
on the protected attribute $A$. The estimand is the coefficient shift
\begin{equation}
  \Delta_{\mathrm{PD},j} = \phi_j - \phi'_j,
  \label{eq:pd-estimand}
\end{equation}
the change in the coefficient on $W_j$ when $A$ enters
the model. A non-zero shift reveals that $W_j$ was
absorbing part of the influence of $A$ in the restricted
model, the defining feature of proxy discrimination.
 
The
decision rule combines a statistical criterion with a
substantive magnitude threshold. The test statistic for
the coefficient shift is
\[
  z_{\mathrm{PD},j} = \frac{\hat\phi_j - \hat\phi'_j}{
    \widehat{\mathrm{SE}}(\hat\phi_j - \hat\phi'_j)},
\]
where $\widehat{\mathrm{SE}}(\hat\phi_j - \hat\phi'_j)$
is the square root of the full cross-covariance estimator
in Proposition~\ref{prop:shift}. A variable $W_j$ is
flagged as a proxy discriminator if (1) $|z_{\mathrm{PD},j}|
> z_{1-\alpha}$, and (2) the relative shift
$|(\hat\phi_j - \hat\phi'_j) / \hat\phi_j|$ exceeds
a pre-specified minimum $\rho_{\min}$, which we set at
10\%. This threshold is a researcher judgment, chosen to
require that the shift represent a substantively meaningful
change in the coefficient rather than a statistically
detectable but economically trivial movement. It should
be pre-specified and documented in any regulatory
application.
The first criterion guards against spurious detection in
large samples. The second guards against flagging shifts
that are statistically detectable but too small to
indicate meaningful proxy behaviour.
 
\paragraph{Conditional Demographic Parity (CDP).}
The CO DOI draft regulation \citep[][\S\S\,6--7]{doradoi-2023-concerning} encodes a two-step test to operationalise CDP. Please refer to Appendix~\ref{app:cdp} for details of the criterion and its alignment with the draft regulation.

After controlling for legitimate risk factors, premiums
should not differ systematically across protected groups.
The estimands are the conditional mean difference and ratio,
\[
\Delta_\mu = \E[P \mid X_\ell, A=a] - \E[P \mid X_\ell, A=b]
\quad \text{and} \quad
R_\mu = \frac{\E[P \mid X_\ell, A=a]}{\E[P \mid X_\ell, A=b]}.
\]
In its strict form, CDP requires equality of the entire
conditional distribution. In practice we operationalise a
mean-based relaxation, testing whether the conditional
expectation is equal across groups after controlling for
$X_\ell$. The general regression model is
\begin{equation}
  P_i = \mu_0 + \beta_A\,\mathbf{1}\{A_i = a\}
    + \gamma^\top X_{\ell,i} + \varepsilon_i,
  \label{eq:cdp}
\end{equation}
where $\mu_0$ is the intercept, and $\hat\beta_A$ estimates $\Delta_\mu$ directly. A
log-linear specification $\log P_i = \mu_0 +
\beta_A\,\mathbf{1}\{A_i = a\} + \gamma^\top X_{\ell,i}
+ \varepsilon_i$ is also natural, in which case $\beta_A
= \log R_\mu$, $\exp(\hat\beta_A)$ estimates $R_\mu$, and
the implied dollar gap at mean premium $\bar{P}$ is
$\bar{P}(\exp(\hat\beta_A) - 1)$. Because the pricing
output $P_i$ is a deterministic function of the submitted
profile, the standard error of $\hat\beta_A$ must be
computed using the HC3 sandwich estimator of
Proposition~\ref{prop:ols} regardless of which
specification is used. Note that for the same reason, adherence to classical regression assumptions does not need to dictate which form of model is appropriate. 
 
\subsection{Equivalence testing as the decision framework}
 
Conventional significance testing places the null at zero
disparity, so a model passes unless the data prove
otherwise. In large administrative datasets, any non-zero
gap will be detected, flagging models as unfair even when
the disparity falls within regulatory tolerance. In small
datasets, genuine disparities go undetected. Neither
outcome is useful for a compliance audit.
 
We adopt equivalence testing (TOST)
\citep{schuirmann-1987-comparison}, under which the null
hypothesis is that the disparity equals or exceeds the
regulatory tolerance. A model passes only when the
data provide affirmative evidence that the disparity is
within tolerance. This formulation places the burden of
demonstrating compliance on the firm, which is the
appropriate standard for a regulatory audit.
 
For the level-gap criterion, a model passes CDP with respect
to margin $\delta > 0$ if the $(1-2\alpha)$ confidence interval
for $\Delta_\mu$ lies entirely within $(-\delta, +\delta)$.
Equivalently, we test
\[
H_0: |\Delta_\mu| \ge \delta \quad \text{vs.} \quad
H_A: |\Delta_\mu| < \delta.
\]
For the ratio criterion with tolerance $\tau \in (0,1)$, the
test is
\[
H_0: |\log R_\mu| \ge \log(1/\tau) \quad \text{vs.} \quad
H_A: |\log R_\mu| < \log(1/\tau).
\]
In this paper, we assume a company passes CDP if and only if both conditions hold
simultaneously, that is the CI for $\hat\beta_A$ lies entirely within
$(\log\tau, -\log\tau)$ and the implied dollar gap lies
within $(-\delta, +\delta)$. In practice, regulators could set specific tolerance standards that meet their requirements. 
 
The regulatory tolerance margins already specified in existing
guidance map directly onto the TOST bounds. The CO DOI 5\%
price gap threshold corresponds to $\delta = 0.05 \times
\bar{P}$; the standard 0.80 adverse impact ratio corresponds
to $\tau = 0.80$, giving a log-ratio band of $(-0.223, +0.223)$.
 
An important feature of TOST that distinguishes it from
significance testing is its behaviour under large samples.
With sufficiently many observations, a significance test will
reject $H_0: \beta_A = 0$ for any non-zero gap, flagging models
as unfair when the disparity is trivially small. TOST is
immune to this because it asks whether the gap is small
enough, not whether it is exactly zero. Large samples
help TOST by narrowing the confidence interval, making it
easier to confirm that the gap is within tolerance if the
model is genuinely fair.
 
For PD, the asymmetry of the criterion
means TOST is not applied in the same form as for CDP. The
decision rule instead requires both statistical significance
at the firm-level $\alpha$ and a relative shift exceeding
$\rho_{\min} = 10\%$, as specified in Section~\ref{sec:FairnessHypo}.
Under this two-part rule, we assume a variable is flagged only when
the data provide positive evidence of both statistical and
substantive proxy behaviour; and conversely evidence against substantial impact can be positively generated. 
 
\subsection{Power and sample size}
 
Under TOST, power is the probability that a genuinely fair
model correctly receives a pass verdict. A model receives an
Insufficient Information verdict not only when it is unfair,
but also when the confidence interval is too wide to fall
inside the tolerance band, an outcome that reflects
insufficient data rather than a genuine failure of
compliance. 

As is the case for testing PD and CPD, standard regression sample size calculations cannot be employed when price is deterministic.  However,  tests in both cases are based on a variance that can be estimated from a pilot sample or from historical data on a comparable pricing system, or from quantities obtained from historical audits. 

Both PD and CPD produce estimates $\Delta$ either for an effect estimated in a regression model (CPD) or for the change in coefficents of a regression when a protected attribute covariate is included (PD).  Let $\sigma^2_\Delta$ denote
variance of $\Delta$ obtained from
the pilot study. The power of a test at the assumed true shift
$\Delta^* \ne d$ in which $d$ is the relevant threshold in the TOST framework is
\[
  \pi \approx 1 - \Phi\!\left(
    z_{1-\alpha} -
    \frac{|\Delta^* - d|}{\sigma_\Delta}
  \right).
\]
Observing that the covariances in Propositions \ref{prop:ols}-\ref{prop:shift} all scale with $n$, if there are $n_0$ observations in the pilot data, the requried audit sample to detect a true shift of magnitude $|\Delta^* - d|$ with power $1-\beta$ is 
\[
n \ge \frac{n_0 \sigma^2_{\Delta}(z_{1-\alpha} + z_{1-\beta})^2}{|\Delta^* - d|^2}.
\]
In the case of PD, we test $\Delta^*_{\mathrm{PD},j} \ne 0$ setting $d = 0$ and run a single test. For CDP, we test both $\Delta^*_{\mu} \geq \delta$ and $|\log R_{\mu} | \geq |\log \tau|$. Here the required sample size is computed separately for each test and the larger taken as the planning target.

\subsection{The audit protocol}
 
Figure~\ref{fig:audit-flow} summarises the end-to-end audit flow. All design
choices must be fixed before examining the data. Post-hoc
decisions about which tests to run or which margins to apply
undermine the stated type-I error guarantees and reduce the
auditability of the process.
 
\begin{figure}[h!]
\centering
\scalebox{0.45}{
    \begin{tikzpicture}[
  >=Latex,
  font=\sffamily,
  node distance=12mm and 20mm,
  box/.style={draw, rounded corners=2mm, align=center, inner sep=5mm,
    minimum width=70mm, minimum height=22mm, fill=white},
  decision/.style={draw, diamond, aspect=2, align=center,
    inner sep=3mm, fill=white},
  line/.style={-Latex, line width=.5mm},
  label/.style={font=\bfseries\sffamily, text=textgray},
  note/.style={font=\sffamily, text=textgray}
]
\node[box, fill=plan!15, label={[label, text=plan]above:PLAN}]
  (criterion) {\textbf{Pre-audit Setup}};
\node[box, fill=plan!15, right=of criterion] (design)
  {\textbf{Design Quote Sample}\\[2mm] Representative};
\node[box, fill=audit!15, below=of criterion,
  label={[label, text=audit]above:AUDIT}] (collect)
  {\textbf{Statistical \color{blue}{Fairness Testing}}\\[2mm]
   Estimate effects \& CIs};
\node[box, fill=audit!15, below=of design] (analyze)
  {\textbf{Collect Quotes/Sample}};
\node[decision, fill=decide!15, below=15mm of collect, xshift=35mm,
  label={[label, text=decide]above:DECIDE}] (equiv)
  {\textbf{Within}\\[1mm]\textbf{Tolerance?}};
\node[box, fill=decide!15, below left=15mm and 45mm of equiv,
  minimum width=50mm, minimum height=18mm] (pass)
  {\textbf{Pass}\\[2mm] Within margins};
\node[box, fill=decide!15, below=25mm of equiv,
  minimum width=60mm, minimum height=18mm] (insuff)
  {\textbf{Further Investigation}\\[2mm] CI too wide or on boundary};
\node[box, fill=decide!15, below right=15mm and 45mm of equiv,
  minimum width=50mm, minimum height=18mm] (fail)
  {\textbf{Fail}\\[2mm] Outside margins};
\node[box, fill=improve!15, below=35mm of insuff,
  label={[label, text=improve]above:IMPROVE}] (remed)
  {\textbf{Remediation}\\[2mm] Review \& re-test};
\draw[-Latex, thick, color=plan!80!black]
  ([yshift=15mm]criterion.north) -- (criterion.north)
  node[midway, note, above=8mm of criterion]{START};
\draw[line, color=plan!80!black]   (criterion) -- (design);
\draw[line, color=audit!80!black]  (design) -- (analyze);
\draw[line, color=audit!80!black]  (analyze) -- (collect);
\draw[line, color=decide!80!black] (collect.south) -- (equiv.north);
\draw[line, color=decide!80!black] (equiv) -|
  node[near start, above left, note]{Pass} (pass);
\draw[line, color=decide!80!black] (equiv) --
  node[midway, right, note]{Unclear} (insuff);
\draw[line, color=decide!80!black] (equiv) -|
  node[near start, above right, note]{Fail} (fail);
\draw[line, color=improve!80!black] (fail) -- (remed);
\coordinate (rightboundary) at ([xshift=40mm]design.east);
\draw[-Latex, thick, color=decide!70!black, rounded corners=6mm]
  (insuff.east) -- (rightboundary |- insuff.east)
  -- (rightboundary |- design.east) -- (design.east)
  node[pos=0.75, right, note, xshift=27mm]{Collect more data};
\draw[line, color=decide!70!black]
  (insuff.south) -- (remed.north)
  node[midway, right, note]{Escalate};
\coordinate (leftboundary) at ([xshift=-55mm]criterion.west);
\draw[-Latex, thick, color=improve!70!black, rounded corners=6mm]
  (remed.west) -- (leftboundary |- remed.west)
  -- (leftboundary |- criterion.west) -- (criterion.west)
  node[pos=0.08, left, note, xshift=-2mm]{Re-test};
\end{tikzpicture}}
 
\smallskip
\noindent\hfill
{\color{plan}\rule{5mm}{2.5mm}}~Plan \quad
{\color{audit}\rule{5mm}{2.5mm}}~Audit \quad
{\color{decide}\rule{5mm}{2.5mm}}~Decide \quad
{\color{improve}\rule{5mm}{2.5mm}}~Improve
\hfill\null
 
\caption{End-to-end flow for the fairness audit protocol.}
\label{fig:audit-flow}
\end{figure}
 
\paragraph{Pre-audit setup.}
\begin{enumerate}
 
\item \textbf{Select a fairness criterion} from
  \{PD, CDP\}. PD is appropriate when the audit targets
  specific rating variables suspected of acting as proxies
  for the protected attribute. CDP is appropriate when the
  audit evaluates the system-level price gap after controlling
  for legitimate risk factors.  Other fairness notions may be appropriate depending on the application context and legislative requirements \citep{xin-huang-2023-anti}. 
 
\item \textbf{Specify the protected attribute} $A$. If not
  directly observed, document the inference method and its
  expected accuracy.
 
\item \textbf{Specify legitimate rating factors} $X_\ell$.
  These are the variables the regulator recognises as
  actuarially justified risk predictors. The choice should
  be documented and, where possible, grounded in the
  applicable regulatory guidance.
 
\item \textbf{Define the response variable} $P$.
  Common choices are the quoted premium, pure premium, loss
  ratio, or approval indicator. The choice determines the
  economic interpretation of the disparity and should match
  the regulatory question being addressed.
 
\item \textbf{Set tolerance margins}: level-gap margin
  $\delta$, ratio margin $\tau$, significance level
  $\alpha$, and minimum relative shift $\rho_{\min}$ for PD.
 
\item \textbf{Specify the standard error estimator.}
  Use HC3 by default for CDP. For the PD coefficient-shift
  test, use the full cross-covariance formula of
  Proposition~\ref{prop:shift}. Both apply because pricing
  outputs are deterministic (Propositions~\ref{prop:ols}
  and~\ref{prop:glm}).
 
\item \textbf{Define test scope.}
  Each company is audited independently at the pre-specified
  $\alpha$. No cross-company correction is applied. If a
  single company's audit involves multiple proxy variables,
  within-company correction across those variables is
  appropriate.
 
\item \textbf{Design the quote sample.} Construct a
  representative sample of audit profiles from the insured
  portfolio, stratified to reflect the mix of risk
  characteristics in the covered population. Both PD and
  CDP use a representative sample. The sample size should
  satisfy the power requirement from Section~\ref{sec:framework}.
 
\end{enumerate}

All quote requests should be submitted within a short time
window to minimise the risk of model updates between
submissions. For each observation, record $(P, A, X_\ell,
t, \text{model version})$ and store the complete dataset
along with a hashed manifest for reproducibility. Apply the
criterion-specific test using the pre-specified standard
error estimator and decision rule.

The final verdict is one of three outcomes. Pass:
the confidence interval lies entirely within the tolerance
band, providing positive evidence of fairness. Fail:
the confidence interval lies entirely outside the tolerance
band, establishing a material disparity. Insufficient
information: the confidence interval is too wide to determine
whether the true disparity lies within or outside the tolerance
band. This outcome does not constitute a pass and should
prompt additional data collection or a pre-specified
escalation procedure. When a model fails, the proxy
discrimination screen runs the PD test to identify candidate
driver variables, heterogeneous effects analysis re-runs the
failing test within risk tiers and geographic segments, and
remediation options include removing or recalibrating the
driver variable, imposing fairness constraints during model
retraining, or implementing ongoing monitoring with a trigger
threshold for re-audit.
 
\section{Empirical Validation}
\label{sec:empirical}
 
\subsection{Data and setting}
 
We use the ProPublica Illinois auto insurance dataset
\citep{larson-etal-2017-minority}, which comprises 31,382
zip-code-company observations for 34 insurers across 923
zip codes. The response is the annual combined liability
premium quoted
for a standardised driver profile, a 30-year-old female
schoolteacher with excellent credit, no accidents, driving
a 2016 Toyota Camry. These quotes are deterministic outputs. The same profile
submitted to the same company at the same point in time
always yields the same premium, the property that
necessitates the inferential corrections developed in
Section~\ref{sec:inference}.
 
The protected attribute is the zip-code minority flag ($A = 1$
for zip codes with at least 50\% non-white population). Of the
923 zip codes, 104 (11.3\%) are classified as majority-minority.
The audit model uses log state risk (the Illinois DOI aggregate
loss cost per insured vehicle) and a Chicago indicator as the
legitimate rating factors $X_\ell$.
 
The insurance industry contested the ProPublica analysis,
arguing that the aggregate loss cost variable is an imperfect
proxy for each insurer's individual risk model
\citep{iii-2017-propublica}. We use this dataset as a
well-documented illustration of the audit framework, not
as the basis for conclusions about discriminatory intent.
Table~\ref{tab:data-summary} provides summary statistics.
Majority-minority zip codes pay mean premiums 35.5\% higher
than white zip codes (\$482 vs.\ \$356), while their state
risk is only 11.6\% higher. The regression analysis below
controls for these factors.
 
\begin{table}[h]
\centering
\caption{Summary statistics and unconditional comparisons,
  Illinois auto insurance dataset ($n = 31{,}382$ zip-company
  observations).}
\label{tab:data-summary}
\begin{tabular}{lrrrrrr}
\hline
& \multicolumn{3}{c}{\textbf{Full sample}} &
  \multicolumn{3}{c}{\textbf{By minority status}} \\
\cline{2-4}\cline{5-7}
\textbf{Variable} & Mean & SD & Max &
  White zips & Min.\ zips & Ratio \\
\hline
Combined premium (\$) & 370 & 148 & 1345 & 356 & 482 & 1.355 \\
State risk (\$)       & 163 &  51 &  298 & 161 & 179 & 1.116 \\
Excess premium (\$)   & 207 & 108 &  --- & 195 & 303 & 1.551 \\
Pct minority (\%)     &  19 &  24 &   99 & -- & -- & --   \\
\hline
\end{tabular}
\end{table}
 
\subsection{Measurement validity: the cost of incorrect standard errors}
 
For each of the 34 companies we fit the CDP regression by
OLS and compare two standard error estimators, the classical
formula $\hat\sigma^2(X^\top X)^{-1}$ and the HC3 sandwich
estimator. Table~\ref{tab:se-comparison} reports the ratio
$\rho_k = \mathrm{SE}_{\mathrm{HC3}} / \mathrm{SE}_{\mathrm{classical}}$
for all 34 companies.
 
The ratios range from 0.685 to 1.775, with a mean of 1.065.
For 14 companies, the departure from 1.0 exceeds 0.15,
large enough to change test outcomes. The classical formula understates the true standard error
for some companies and overstates it for others. The
direction cannot be determined without computing both.
 
\begin{table}[h]
\centering
\caption{CDP regression: ratio of HC3 to classical standard
  errors for the minority coefficient $\hat\beta_{A,k}$, selected
  companies. Full results in Table~\ref{tab:se-full} (Appendix~\ref{app:empirical}).
  $\rho = \mathrm{SE}_{\mathrm{HC3}} / \mathrm{SE}_{\mathrm{classical}}$.}
\label{tab:se-comparison}
\begin{tabular}{lrrrrr}
\hline
\textbf{Company} & $\hat\beta_A$ & $\mathrm{SE}_{\mathrm{cl}}$ &
  $\mathrm{SE}_{\mathrm{HC3}}$ & $\rho$ & $R^2$ \\
\hline
Economy Preferred Ins Co    & +0.308 & 0.017 & 0.030 & 1.775 & 0.476 \\
Metropolitan Cas Ins Co     & +0.308 & 0.017 & 0.029 & 1.679 & 0.482 \\
Farmers Automobile Ins Assoc& +0.294 & 0.024 & 0.040 & 1.660 & 0.675 \\
\multicolumn{6}{c}{\emph{\ldots\ 28 companies ($\rho$ ranging 0.80--1.33) \ldots}} \\
State Farm Mut Auto Ins Co  & +0.222 & 0.017 & 0.013 & 0.753 & 0.450 \\
Trumbull Ins Co             & +0.130 & 0.013 & 0.009 & 0.685 & 0.405 \\
\hline
\multicolumn{3}{l}{Mean $\rho$ = 1.065 \quad Median $\rho$ = 1.051} & & & \\
\hline
\end{tabular}
\end{table}
 
The $R^2$ values (mean 43.6\%) indicate that the linear
audit model leaves substantial pricing variation unexplained,
reflecting territory relativities, tiering structures, and
other components not captured by the three-variable audit
model. These unexplained components create the approximation
errors that drive the divergence between classical and sandwich
standard errors.
 
The practical implication is significant. No universal
conservative correction is available, since the direction of
the discrepancy varies across companies in the same dataset
and the same audit. Practitioners who rely on standard OLS
output will sometimes under-reject, missing real disparities,
and sometimes over-reject, flagging non-existent ones.
 
\subsection{Full audit under the corrected framework: PD}
 
We apply the full PD audit to the log state risk coefficient.
This is the variable most likely to serve as a proxy for
race given the correlation between zip-code risk levels and
minority population share in Illinois, and it is the only
rating variable available in the publicly accessible dataset.
A richer regulatory dataset would permit testing of multiple
proxy candidates. In that setting, within-company multiplicity
correction across the tested variables would be appropriate,
as specified in step 7 of the pre-audit setup. The pre-audit setup specifies
criterion (PD), proxy variable (log state risk), protected
attribute (minority flag), significance level $\alpha = 0.05$,
minimum relative shift $\rho_{\min} = 10\%$, and standard error
estimator (score-based sandwich, Proposition~\ref{prop:shift}).
Each company is tested independently at $\alpha = 0.05$.

We compare the log-risk coefficient across the restricted
and extended models for each company, computing the
standard error of the shift under both the independent-samples
formula and the cross-covariance corrected formula of
Proposition~\ref{prop:shift}. The correction reduces the
standard error in every case. The ratio
$\mathrm{SE}_{\mathrm{full}} / \mathrm{SE}_{\mathrm{ind}}$
has mean 0.082, so the corrected standard error is on
average 92\% smaller. Under the independent-samples
formula, zero companies produce $|z| > 1.645$. Under
the corrected formula, all 34 do. Three examples illustrate the mechanism.
 
For Metropolitan Prop \& Cas Ins Co, the log-risk coefficient
shifts by 12.5\% when the minority flag is added. The
independent-samples standard error is 0.0245, giving
$z_{\mathrm{ind}} = 1.34$, which is not significant. The
corrected standard error is 0.0021, a reduction by a factor
of 11.7, and $z_{\mathrm{full}} = 15.73$ is significant at
the 0.1\% level.
 
For Farmers Automobile Ins Assoc, a shift of 22.2\% is
insignificant under the independent-samples formula
($z_{\mathrm{ind}} = 1.07$) but highly significant under the
corrected formula ($z_{\mathrm{full}} = 12.59$).
 
For Geico Ind Co, the ratio is 0.081 and the corrected
formula gives $z_{\mathrm{full}} = 7.09$ compared to
$z_{\mathrm{ind}} = 0.58$ under the independent-samples
formula, a company that appeared comfortably
non-significant is revealed to be highly significant once
the cross-covariance correction is applied.

Applying the two-part decision rule at the firm level, 
$|z_{\mathrm{full}}| > 1.645$ and relative shift $> 10\%$, flags 16 of the 34 companies as proxy discriminators.
Each company is audited independently. No cross-company
correction is applied, as each verdict concerns only the
firm being tested. The 16 flagged companies are Allstate
Indemnity, Country Mutual, Country Preferred, Economy
Preferred, Farmers Automobile, Geico General, Geico
Indemnity, Government Employees, Metropolitan Casualty,
Metropolitan Prop \& Cas, Owners, Safeco, Travelers
Commercial, Travelers Home \& Marine, United Services
Automobile, and USAA Casualty, with relative shifts ranging
from 10.2\% to 22.2\%. All 34 companies are individually
significant under the corrected formula. The remaining 18
do not meet the 10\% substantive shift threshold and are
not flagged. Full results are in
Table~\ref{tab:proxy-full} (Appendix~\ref{app:empirical:proxy}). Without
the cross-covariance correction, none of the 34 companies
would be flagged under either criterion.

\begin{table}[h]
\centering
\caption{PD audit results: selected companies illustrating the
  inferential correction. Shift $= |(\hat\phi -
  \hat\phi')/\hat\phi|$, the relative shift in the log-risk
  coefficient when the minority flag is added.
  $\mathrm{SE}_{\mathrm{ind}}$ uses the independent-samples
  formula; $\mathrm{SE}_{\mathrm{full}}$ uses the corrected
  cross-covariance formula of Proposition~\ref{prop:shift}.
  Dec.\ $=$ two-part verdict ($|z_{\mathrm{full}}| > 1.645$
  and Shift $> 10\%$). Full results in
  Table~\ref{tab:proxy-full}.}
\label{tab:pd-results}
\begin{tabular}{lrrrrrl}
\hline
\textbf{Company} & \textbf{Shift} &
  $\mathrm{SE}_{\mathrm{ind}}$ &
  $\mathrm{SE}_{\mathrm{full}}$ &
  $z_{\mathrm{ind}}$ & $z_{\mathrm{full}}$ &
  \textbf{Dec.} \\
\hline
\multicolumn{7}{l}{\textit{Panel A: flagged companies (selected)}} \\
Farmers Automobile Ins Assoc    & 22.2\% & 0.0253 & 0.0022 &  1.07 & 12.59 & FLAG \\
Economy Preferred Ins Co        & 16.5\% & 0.0248 & 0.0021 &  1.15 & 13.71 & FLAG \\
Metropolitan Prop \& Cas Ins Co & 12.5\% & 0.0245 & 0.0021 &  1.34 & 15.73 & FLAG \\
Geico Ind Co                    & 10.2\% & 0.0276 & 0.0022 &  0.58 &  7.09 & FLAG \\[3pt]
\multicolumn{7}{l}{\textit{Panel B: significant but below threshold (selected)}} \\
Metropolitan Grp Prop \& Cas    &  9.3\% & 0.0268 & 0.0022 &  0.96 & 11.59 & --- \\
Garrison Prop \& Cas Ins Co     &  7.8\% & 0.0250 & 0.0020 &  0.32 &  4.00 & --- \\
\hline
\multicolumn{7}{l}{Mean SE ratio (full/ind) $= 0.082$;\enspace
  $|z_{\mathrm{ind}}| > 1.645$: 0 of 34;\enspace
  $|z_{\mathrm{full}}| > 1.645$: 34 of 34} \\
\hline
\end{tabular}
\end{table}

 Table~\ref{tab:pd-results} illustrates the mechanism for
selected companies. Panel A shows four flagged companies
spanning the range of relative shifts. Panel B shows two
companies that are statistically significant under the
corrected formula but do not meet the 10\% substantive
threshold. The pattern is uniform across all 34 companies. The ratio
$\mathrm{SE}_{\mathrm{full}}/\mathrm{SE}_{\mathrm{ind}}$
has mean 0.082 and range $[0.080, 0.085]$, so the
correction is of similar magnitude for every company in
the dataset regardless of pricing structure.

\subsection{Full audit under the corrected framework: CDP}
 
We apply the full audit protocol to the CDP criterion.
The pre-audit setup specifies criterion (CDP), protected
attribute (minority flag), rating factors (log state risk
and Chicago indicator), response variable (log combined
premium), tolerance margins ($\delta = 5\%$ of mean premium
$= \$18.51$, $\tau = 0.80$, $\alpha = 0.05$), standard error
estimator (HC3). Each company is audited independently
at $\alpha = 0.05$.

 We use the log-linear specification of
Equation~\ref{eq:cdp} for three reasons. Insurance
premiums are right-skewed and strictly positive, making
the log transformation the natural variance-stabilising
choice. The multiplicative pricing structure of insurance
rating systems means the ratio $R_\mu$ is the more
interpretable fairness measure, and under the log-linear
model $\hat\beta$ estimates $\log R_\mu$ directly.
Finally, the Colorado DOI tolerance of $\tau = 0.80$
is naturally expressed as $|\hat\beta| < \log(1/0.80)
= 0.223$, a bound directly on the estimated coefficient.

For each company $k$ we estimate a company-specific
version of Equation~\ref{eq:cdp}, where observations
are indexed by company $k$ and zip code $z$, and the
legitimate rating factors $X_\ell$ consist of two
variables: log state risk and a Chicago indicator.
The estimated model is
\begin{equation}
\log P_{kz} = \mu_{0k} + \beta_{A,k} \cdot \mathbf{1}\{A_z = 1\}
  + \gamma_k \log(\text{StateRisk}_z)
  + \psi_k \cdot \text{Chicago}_z + r_{kz},
\label{eq:cdp-case}
\end{equation}
where $\mu_{0k}$ is the company-specific intercept, $\gamma_k$ is the coefficient on log state risk, $\psi_k$ is the coefficient on the Chicago indicator, $\beta_{A,k}$ captures the conditional log-premium 
gap for company $k$ alone, and $r_{kz}$ is the approximation error for company $k$
at zip code $z$. Each
company is estimated separately so that $\hat\beta_{A,k}$
captures the conditional log-premium gap for company $k$
alone. A company
passes CDP if and only if the 90\% CI for $\beta_{A,k}$ lies
entirely within
$(\log 0.80, -\log 0.80) = (-0.223, +0.223)$
and the implied dollar gap $\bar{P}(\exp(\hat\beta_{A,k}-1)$
lies within $(\pm\$18.51)$.
 
\begin{table}[h]
\centering
\caption{CDP audit results: companies at the top and bottom
  of the disparity distribution. Full results in Table~\ref{tab:cdp-full} (Appendix~\ref{app:empirical}). $\beta_A$ = conditional log-premium gap
  (HC3 SEs); Ratio $= e^{\beta_A}$; Gap = dollar difference at
  mean premium (\$370). TOST: $\delta = 5\%$ of mean,
  $\tau = 0.80$, $\alpha = 0.05$.}
\label{tab:cdp-results}
\begin{tabular}{lrrrr}
\hline
\textbf{Company} & \textbf{Gap (\$)} & \textbf{Ratio} &
  \textbf{90\% CI for $\beta_A$} & \textbf{Dec.} \\
\hline
Metropolitan Prop \& Cas Ins Co & \$158 & 1.427 &
  $[+0.315,\;+0.397]$ & FAIL \\
Allstate Ind Co                 & \$138 & 1.374 &
  $[+0.287,\;+0.349]$ & FAIL \\
\multicolumn{5}{c}{\emph{\ldots\ 29 companies
  (all FAIL, ratios 1.10--1.37) \ldots}} \\
USAA Cas Ins Co                 & \$ 35 & 1.095 &
  $[+0.079,\;+0.102]$ & FAIL \\
Garrison Prop \& Cas Ins Co     & \$ 34 & 1.091 &
  $[+0.074,\;+0.100]$ & FAIL \\
\hline
\end{tabular}
\end{table}
 
All 34 companies fail the CDP test. Price ratios range from
1.09 to 1.43, implying annual premiums in majority-minority
zip codes \$34 to \$158 higher than in comparable-risk white
zip codes. None of the 90\% confidence intervals approaches
the upper tolerance boundary of $+0.223$. For comparison, all 34
companies are also flagged under conventional significance
testing ($H_0: \beta_k = 0$, $\alpha = 0.05$) because the
disparities are large enough to be detected under either
inferential framework. The added value of TOST for CDP is
therefore not a difference in detection but in
interpretation. TOST provides a positive compliance verdict
only when the confidence interval falls entirely within the
tolerance region, a standard that no insurer meets here,
rather than simply failing to reject a null of zero
disparity.
 
These results confirm the core finding of
\citet{larson-etal-2017-minority}. The TOST framing adds two things the original analysis
could not provide: a formal pass-or-fail verdict tied to a
pre-specified tolerance, and confidence intervals that
quantify
estimation precision rather than simply detecting a
non-zero gap. The value of the equivalence framing is
greatest in datasets with more modest disparities, where
significance testing flags gaps that fall within
regulatory tolerance.
 
\subsection{Limitations}
 
Several limitations bear on the interpretation of these
results. The analysis operates at the zip code level, so
disparities reflect neighbourhood-level averages rather
than individual pricing decisions. The minority flag is
derived from census composition rather than directly
observed. \citet{kallus-etal-2021-assessing} show that
proxy-based measurement can render disparity estimates
unidentifiable. The audit model controls for two variables,
and a disparity that survives those controls may be
partially explained by approved factors we cannot observe,
including territory relativities and tier assignments. The
state risk variable is a 2012--2014 aggregate applied to
2017 premiums and does not capture each insurer's
individual loss experience. The data are standardised
quotes for a single driver profile, not actual premiums
paid. These limitations are inherent to the publicly
available dataset and do not affect the validity of the
inferential corrections, which apply equally to the richer
data available to regulators.

\section{Discussion}
\label{sec:discussion}
 
\subsection{Implications for bias detection in algorithmic systems}

The IS literature on algorithmic bias
\citep{fu-etal-2021-crowds, hu-etal-2025-human} uses
hypothesis testing to detect bias. When the
response is stochastic, this is a reasonable approximation.
When the response is a deterministic algorithm output, the
independence assumption fails by construction and tests
can be invalid. Any IS study using statistical testing in a deterministic
algorithmic context should revisit its null findings
with the corrected variance.

The inferential problem we identify is not limited to
fairness auditing. Any IS study that regresses a
deterministic algorithmic output on individual or group
characteristics, such as personalisation effects, pricing
discrimination, recommendation diversity, faces the
same structural mismatch between the data-generating
process and the inference machinery. The output is a
fixed function of inputs, not a draw from a distribution,
so classical standard errors produce uncertainty estimates
that are wrong in both direction and magnitude.
Propositions~\ref{prop:ols} and~\ref{prop:glm} establish
the correct procedure and recommend using HC3 by default for any regression
whose response is an algorithmic output. The computational
cost is negligible, and the inferential cost of not doing so
is, as the Illinois data show, material.

\subsection{The burden of proof in algorithmic fairness}
 
TOST carries accountability consequences that extend beyond
the statistical properties of the test. Under significance
testing, the algorithm is presumed fair unless the data
establishes otherwise. The regulator bears the evidentiary
burden. TOST reverses this allocation. The firm bears the burden
of demonstrating that its pricing gap falls within the
regulatory tolerance, which is the appropriate structure
when the firm controls the system being tested.
 
The reversal matters most in large administrative datasets.
With the sample sizes typical of insurance, credit, and
mortgage audits, a significance test detects any non-zero
disparity, including those within regulatory tolerance.
Firms with genuinely fair systems cannot obtain a clean
bill of health under significance testing, because that
framework can only fail to reject the null of unfairness,
not affirm compliance. TOST provides this affirmation.
It also produces a third verdict that significance testing
cannot. When the confidence interval is too wide to fall
entirely within or outside the tolerance band, the
appropriate verdict is Insufficient Information rather
than an implicit pass, directing additional data collection
before the audit closes.

\subsection{Consumer harm implications}
 
The inferential errors translate into identifiable 
costs. When an audit incorrectly clears an insurer whose
pricing violates conditional demographic parity, minority
policyholders continue to pay excess premiums without
the regulatory correction that a valid audit would trigger.
In the Illinois data, those excess premiums range from \$34
to \$158 per year. The CDP disparities are large enough
that they would have been detected under either classical
or HC3 standard errors. The CDP inference problem is
mainly about precision, not detection. The proxy
discrimination problem is more consequential.
Under the standard variance formula, zero companies were
flagged. Under the correct formula, 17 were identified. Rating variables that proxy for race, through occupation, zip code, or credit attributes, continue to operate undisturbed when the test lacks the power to detect them. The cross-covariance correction restores that power and supports regulatory requirements that firms justify or remove such variables.

\section{Conclusion}
\label{sec:conclusion}
 
The regression-based tests used to audit algorithmic
pricing systems are structurally misspecified for the
deterministic outputs they test. We derive the correct
sandwich variance for OLS and GLM audit regressions and
the correct cross-covariance formula for the
coefficient-shift test used to detect proxy
discrimination. Beyond the inferential problem, we
formalise proxy discrimination and conditional
demographic parity as named testable criteria, adopt
equivalence testing as the decision framework so that
a compliance verdict requires affirmative evidence
rather than a failure to detect, and combine these
elements into a complete pre-specified audit protocol.
 
The empirical results underscore the practical stakes.
Every Illinois insurer fails the conditional demographic
parity test, with minority zip codes paying \$34--\$158
more per year than comparable-risk white zip codes after
controlling for risk and geography. The proxy
discrimination results are more striking still: the
cross-covariance correction reduces the standard error
by 92 percent on average, reclassifying all 34 insurers
from non-significant to statistically significant.
Sixteen meet both the significance and substantive shift
thresholds and are formally flagged. None would be
flagged under standard practice. Together these findings
establish that the inferential errors in current audit
practice have significant policy and practice impacts. 
 
Three directions for future work follow directly. The
tolerance margins used here are taken from existing
regulatory proposals. A welfare-theoretic basis for
calibrating them, connecting the margins to the consumer
cost of excess premiums, would strengthen the normative
foundations of the framework. The analysis treats each
audit as a cross-sectional exercise at a single point in
time. Extending the inferential results to panel settings
and longitudinal monitoring would increase practical
relevance. And while the Illinois auto insurance data
provide a well-documented validation context, applying
the framework to credit scoring, mortgage pricing, and
hiring algorithms would establish the generality of the
inferential corrections and the audit protocol across
the full range of consequential deterministic systems.

 
 
\phantomsection
\addcontentsline{toc}{section}{References}
\bibliographystyle{apacite}
\bibliography{references_ms}

@misc{iii-2017-propublica,
  author       = {Lynch, James},
  title        = {Why {ProPublica}'s Auto Insurance Report Is
                  Inaccurate, Unfair and Irresponsible},
  howpublished = {Insurance Information Institute},
  month        = apr,
  year         = {2017},
  url          = {https://www.iii.org/article/why-propublicas-auto-insurance-report-is-inaccurate-unfair-and-irresponsible}
}

@article{voicu2018using,
  title={Using first name information to improve race and ethnicity classification},
  author={Voicu, Ioan},
  journal={Statistics and Public Policy},
  volume={5},
  number={1},
  pages={1--13},
  year={2018},
  publisher={Taylor \& Francis}
}

@misc{xin2026,
      title={How Proxy Race Distorts Regression-Based Fairness Audits}, 
      author={Xi Xin and Giles Hooker and Fei Huang},
      year={2026},
      eprint={2603.17106},
      archivePrefix={arXiv},
      primaryClass={stat.AP},
      url={https://arxiv.org/abs/2603.17106}, 
}

@article{schuirmann-1987-comparison,
  author  = {Schuirmann, Donald J},
  title   = {A Comparison of the Two One-Sided Tests Procedure and the
             Power Approach for Assessing the Equivalence of Average
             Bioavailability},
  journal = {Journal of Pharmacokinetics and Biopharmaceutics},
  volume  = {15},
  number  = {6},
  pages   = {657--680},
  year    = {1987}
}

@article{mackinnon-white-1985,
  author  = {MacKinnon, James G and White, Halbert},
  title   = {Some Heteroskedasticity-Consistent Covariance Matrix Estimators
             with Improved Finite Sample Properties},
  journal = {Journal of Econometrics},
  volume  = {29},
  number  = {3},
  pages   = {305--325},
  year    = {1985}
}

@article{mehrabi-etal-2021-survey,
  title={A survey on bias and fairness in machine learning},
  author={Mehrabi, Ninareh and Morstatter, Fred and Saxena, Nripsuta and Lerman, Kristina and Galstyan, Aram},
  journal={ACM computing surveys (CSUR)},
  volume={54},
  number={6},
  pages={1--35},
  year={2021},
  publisher={ACM New York, NY, USA}
}

@book{barocas-etal-2022-fairness,
  title = {Fairness and Machine Learning: Limitations and Opportunities},
  author = {Solon Barocas and Moritz Hardt and Arvind Narayanan},
  publisher = {MIT Press},
    url = {https://fairmlbook.org/},
  year = {2023}
}

@article{xin-huang-2023-anti,
  title={Antidiscrimination insurance pricing: Regulations, fairness criteria, and models},
  author={Xin, Xi and Huang, Fei},
  journal={North American Actuarial Journal},
  pages={1--35},
  year={2023},
  publisher={Taylor \& Francis}
}

@article{lindholm-etal-2022-discrimination,
  title={Discrimination-free insurance pricing},
  author={Lindholm, Mathias and Richman, Ronald and Tsanakas, Andreas and W{\"u}thrich, Mario V},
  journal={ASTIN Bulletin: The Journal of the IAA},
  volume={52},
  number={1},
  pages={55--89},
  year={2022},
  publisher={Cambridge University Press}
}

@article{prince-schwarcz-2020-proxy,
  title={Proxy discrimination in the age of artificial intelligence and big data},
  author={Prince, Anya ER and Schwarcz, Daniel},
  journal={Iowa L. Rev.},
  volume={105},
  pages={1257},
  year={2020}
}

@misc{doradoi-2023-concerning,
author = {{Colorado Division of Insurance}},
title = {CONCERNING QUANTITATIVE TESTING OF EXTERNAL CONSUMER DATA AND
INFORMATION SOURCES, ALGORITHMS, AND PREDICTIVE MODELS USED FOR
LIFE INSURANCE UNDERWRITING FOR UNFAIRLY DISCRIMINATORY
OUTCOMES},
howpublished = {\url{https://drive.google.com/file/d/1BMFuRKbh39Q7YckPqrhrCRuWp29vJ44O/view
}},
note = {Accessed: 12 June 2024},
year = {2023}
}

@article{duPreez-etal-2024-bias,
  title={From bias to black boxes: understanding and managing the risks of AI--an actuarial perspective},
  author={du Preez, Valerie and Bennet, Shaun and Byrne, Matthew and Couloumy, Aureli{\'e}n and Das, Arijit and Dessain, Jean and Galbraith, Richard and King, Paul and Mutanga, Victor and Schiller, Frank and others},
  journal={British Actuarial Journal},
  volume={29},
  pages={e6},
  year={2024},
  publisher={Cambridge University Press}
}

@article{pope-sydnor-2011-implementing,
author = {Pope, Devin G. and Sydnor, Justin R.},
copyright = {Copyright © 2011 American Economic Association},
issn = {1945-7731},
journal = {American Economic Journal: Economic Policy},
language = {eng},
number = {3},
pages = {206-231},
publisher = {American Economic Association},
title = {Implementing Anti-Discrimination Policies in Statistical Profiling Models},
volume = {3},
year = {2011},
}

@misc{ftc-2007-creditBased,
author = {{Federal Trade Commission}},
title = {CREDIT-BASED INSURANCE SCORES:
IMPACTS ON CONSUMERS
OF AUTOMOBILE INSURANCE},
howpublished = {\url{https://www.ftc.gov/sites/default/files/documents/reports/credit-based-insurance-scores-impacts-consumers-automobile-insurance-report-congress-federal-trade/p044804facta_report_credit-based_insurance_scores.pdf}},
note = {Accessed: 24 June 2024},
year = {2007}
}

@misc{nydfs-2024-proposed,
author = {{New York State Department of Financial Services}},
title = {Proposed Insurance Circular Letter: January 17, 2024},
howpublished = {\url{https://www.dfs.ny.gov/industry_guidance/circular_letters/cl2024_nn_proposed}},
note = {Accessed: 29 June 2024},
year = {2024}
}

@misc{larson-etal-2017-minority,
  author       = {Larson, Jeff and Angwin, Julia and Kirchner, Lauren and Mattu, Surya},
  title        = {Minority Neighborhoods Pay Higher Car Insurance Premiums Than White
                  Areas With the Same Risk},
  howpublished = {ProPublica and Consumer Reports},
  year         = {2017},
  url          = {https://www.propublica.org/article/minority-neighborhoods-higher-car-insurance-premiums-white-areas-same-risk},
  note         = {Accessed: April 2026}
}

@article{barocas-moritz-2016-big,
  author  = {Barocas, Solon and Selbst, Andrew D.},
  title   = {Big Data's Disparate Impact},
  journal = {California Law Review},
  volume  = {104},
  number  = {3},
  pages   = {671--732},
  year    = {2016}
}

@article{white-1980-heteroskedasticity,
  author  = {White, Halbert},
  title   = {A Heteroskedasticity-Consistent Covariance Matrix
             Estimator and a Direct Test for Heteroskedasticity},
  journal = {Econometrica},
  volume  = {48},
  number  = {4},
  pages   = {817--838},
  year    = {1980}
}

@incollection{eicker-1967-limit,
  author  = {Eicker, Friedhelm},
  title   = {Limit Theorems for Regressions with Unequal and
             Dependent Errors},
  booktitle = {Proceedings of the Fifth Berkeley Symposium on
              Mathematical Statistics and Probability},
  volume  = {1},
  pages   = {59--82},
  year    = {1967},
  publisher = {University of California Press}
}

@article{fu-etal-2021-crowds,
  author  = {Fu, Runshan and Huang, Yan and Singh, Param Vir},
  title   = {Crowds, Lending, Machine, and Bias},
  journal = {Information Systems Research},
  volume  = {32},
  number  = {1},
  pages   = {72--92},
  year    = {2021}
}

@article{hu-etal-2025-human,
  author  = {Hu, Xiyang and Huang, Yan and Li, Beibei and Lu, Tian},
  title   = {Human--Algorithmic Bias: Source, Evolution, and Impact},
  journal = {Management Science},
  volume  = {72},
  number  = {1},
  pages   = {495--514},
  year    = {2025}
}

@article{zhang-xu-2024-fairness,
  author  = {Zhang, Nan and Xu, Heng},
  title   = {Fairness of Ratemaking for Catastrophe Insurance:
             Lessons from Machine Learning},
  journal = {Information Systems Research},
  volume  = {35},
  number  = {2},
  pages   = {469--488},
  year    = {2024}
}

@article{hurlin-etal-2025-fairness,
  author  = {Hurlin, Christophe and P\'{e}rignon, Christophe
             and Saurin, S\'{e}bastien},
  title   = {The Fairness of Credit Scoring Models},
  journal = {Management Science},
  volume  = {72},
  number  = {1},
  pages   = {406--425},
  year    = {2026}
}

@article{lambrecht-tucker-2019-algorithmic,
  author  = {Lambrecht, Anja and Tucker, Catherine E.},
  title   = {Algorithmic Bias? An Empirical Study of Apparent
             Gender-Based Discrimination in the Display of {STEM}
             Career Ads},
  journal = {Management Science},
  volume  = {65},
  number  = {7},
  pages   = {2966--2981},
  year    = {2019}
}

@article{rhue-2023-anchoring,
  author  = {Rhue, Lauren},
  title   = {The Anchoring Effect, Algorithmic Fairness, and
             the Limits of Information Transparency for Emotion
             Artificial Intelligence},
  journal = {Information Systems Research},
  volume  = {35},
  number  = {3},
  pages   = {1479--1496},
  year    = {2024}
}

@article{pagan-1984-econometric,
  author  = {Pagan, Adrian},
  title   = {Econometric Issues in the Analysis of Regressions
             with Generated Regressors},
  journal = {International Economic Review},
  volume  = {25},
  number  = {1},
  pages   = {221--247},
  year    = {1984}
}

@article{murphy-topel-1985-estimation,
  author  = {Murphy, Kevin M. and Topel, Robert H.},
  title   = {Estimation and Inference in Two-Step Econometric Models},
  journal = {Journal of Business \& Economic Statistics},
  volume  = {3},
  number  = {4},
  pages   = {370--379},
  year    = {1985}
}

@article{fuster-etal-2022-predictably,
  author  = {Fuster, Andreas and Goldsmith-Pinkham, Paul and
             Ramadorai, Tarun and Walther, Ansgar},
  title   = {Predictably Unequal? The Effects of Machine
             Learning on Credit Markets},
  journal = {Journal of Finance},
  volume  = {77},
  number  = {1},
  pages   = {5--47},
  year    = {2022}
}

@article{white-1982-maximum,
  author  = {White, Halbert},
  title   = {Maximum Likelihood Estimation of Misspecified Models},
  journal = {Econometrica},
  volume  = {50},
  number  = {1},
  pages   = {1--25},
  year    = {1982}
}

@article{kallus-etal-2021-assessing,
  author  = {Kallus, Nathan and Mao, Xiaojie and Zhou, Angela},
  title   = {Assessing Algorithmic Fairness with Unobserved
             Protected Class Using Data Combination},
  journal = {Management Science},
  volume  = {68},
  number  = {3},
  pages   = {1959--1981},
  year    = {2021}
}

@article{hausman-1978-specification,
  author  = {Hausman, Jerry A.},
  title   = {Specification Tests in Econometrics},
  journal = {Econometrica},
  volume  = {46},
  number  = {6},
  pages   = {1251--1271},
  year    = {1978}
}

@article{shimao-huang-2024-welfare,
  author  = {Huang, Fei  and  Shimao, Hajime},
  title   = {Welfare Implications of Fair and Accountable Insurance Pricing},
  journal = {SSRN Working Paper},
  year    = {2025},
  note    = {Available at SSRN: \url{https://ssrn.com/abstract=4225159}}
}

@article{shimao-huang-2025-welfare,
  author  = {Huang, Fei  and Shimao, Hajime  and Khern-am-nuai, Warut},
  title   = {Do Fair Algorithms Improve Welfare? Evidence from the Insurance Market},
  journal = {SSRN Working Paper},
  year    = {2025},
  note    = {Available at SSRN: \url{https://ssrn.com/abstract=5112616}}
}

@article{lindholm2022,
  author  = {Lindholm, Mathias and Richman, Ronald and Tsanakas, Andreas
             and W\"{u}thrich, Mario V.},
  title   = {Discrimination-Free Insurance Pricing},
  journal = {ASTIN Bulletin},
  volume  = {52},
  number  = {1},
  pages   = {55--89},
  year    = {2022},
  doi     = {10.1017/asb.2021.23}
}

@article{Frees2023,
  author  = {Frees, Edward W. and Huang, Fei},
  title   = {The Discriminating (Pricing) Actuary},
  journal = {North American Actuarial Journal},
  volume  = {27},
  number  = {1},
  pages   = {2--24},
  year    = {2023},
  doi     = {10.1080/10920277.2021.1951296}
}

@article{xin2022,
  author  = {Xin, Xi and Huang, Fei},
  title   = {Antidiscrimination Insurance Pricing: Regulations,
             Fairness Criteria, and Models},
  journal = {North American Actuarial Journal},
  volume  = {28},
  number  = {2},
  pages   = {285--319},
  year    = {2024},
  doi     = {10.1080/10920277.2023.2190528}
}

@article{vincent2022fair,
  author  = {Grari, Vincent and Lamprier, Sylvain and Detyniecki, Marcin},
  title   = {A Fair Pricing Model via Adversarial Learning},
  journal = {arXiv preprint},
  year    = {2022},
  note    = {arXiv:2202.12008}
}

@article{araiza2022discrimination,
  author  = {Araiza Iturria, Carlos Andr\'{e}s and Hardy, Mary
             and Marriott, Paul},
  title   = {A Discrimination-Free Premium under a Causal Framework},
  journal = {North American Actuarial Journal},
  volume  = {28},
  number  = {4},
  pages   = {801--821},
  year    = {2024},
  doi     = {10.1080/10920277.2023.2291524}
}

@article{cote2024fair,
  author  = {C\^{o}t\'{e}, Olivier and C\^{o}t\'{e}, Marie-Pier
             and Charpentier, Arthur},
  title   = {A Fair Price to Pay: Exploiting Causal Graphs for
             Fairness in Insurance},
  journal = {Journal of Risk and Insurance},
  year    = {2025},
  doi     = {10.1111/jori.12503}
}

@article{HENCKAERTS2022117230,
  author  = {Henckaerts, Roel and Antonio, Katrien},
  title   = {The Added Value of Dynamically Updating Motor Insurance
             Prices with Telematics Collected Driving Behavior Data},
  journal = {Insurance: Mathematics and Economics},
  volume  = {107},
  pages   = {79--95},
  year    = {2022},
  doi     = {10.1016/j.insmatheco.2022.03.011}
}

@inproceedings{kleinberg-etal-2018-inherent,
  author    = {Kleinberg, Jon and Mullainathan, Sendhil and Raghavan, Manish},
  title     = {Inherent Trade-Offs in the Fair Determination of Risk Scores},
  booktitle = {Proceedings of the 8th Innovations in Theoretical Computer
               Science Conference (ITCS 2017)},
  series    = {Leibniz International Proceedings in Informatics},
  volume    = {67},
  pages     = {43:1--43:23},
  year      = {2017},
  doi       = {10.4230/LIPIcs.ITCS.2017.43}
}

@article{fu-aseri-2022-unfair,
  author  = {Fu, Runshan and Aseri, Manmohan and Singh, Param Vir and Srinivasan, Kannan},
  title   = {{{``Un''}Fair Machine Learning Algorithms}},
  journal = {Management Science},
  volume  = {68},
  number  = {6},
  pages   = {4173--4195},
  year    = {2022},
  doi     = {10.1287/mnsc.2021.4065}
}

@article{shimao-etal-2025-strategic,
  author  = {Shimao, Hajime and Khern-Am-Nuai, Warut and Kannan, Karthik and Cohen, Maxime C.},
  title   = {Strategic Best-Response Fairness Framework for Fair Machine Learning},
  journal = {Information Systems Research},
  volume  = {36},
  number  = {4},
  pages   = {2391--2403},
  year    = {2025},
  doi     = {10.1287/isre.2022.0055}
}
 
\appendix
 
\section{Proofs of Propositions}
\label{app:proofs}
 
\subsection{Proof of Proposition~\ref{prop:ols}}
\label{proof:ols}
 
Write $\hat\beta - \beta^* = (n^{-1} X^\top X)^{-1}
n^{-1} \sum_{i=1}^n x_i r_i$, where $r_i = f(z_i) -
x_i^\top\beta^*$. By Assumption~\ref{ass:sampling}, the
observations are i.i.d., so by the law of large numbers
$n^{-1}X^\top X \plim \Sigma_{xx}$ and
$n^{-1}\sum x_i r_i \plim \E[x r] = 0$ (since $\beta^*$
minimises $\E[(f(z)-x^\top\beta)^2]$, the first-order
condition gives $\E[xr] = 0$). Consistency follows by
Slutsky's theorem and Assumption~\ref{ass:ident}.
 
For the CLT, the summands $\{x_i r_i\}$ are i.i.d.\ with
mean zero and covariance $\Omega = \E[x x^\top r^2]$, which
is finite by Assumptions~\ref{ass:sampling}
and~\ref{ass:approx} (since $\E[\|x\|^2 r^2] \leq
(\E[\|x\|^4])^{1/2}(\E[r^4])^{1/2} < \infty$ by
Cauchy--Schwarz and bounded moments). By the multivariate
CLT, $n^{-1/2}\sum x_i r_i \dlim \mathcal{N}(0,\Omega)$,
and Slutsky's theorem gives the stated sandwich distribution.
The HC0 estimator
$\widehat{\Cov}(\hat\beta) = (X^\top X)^{-1}(\sum_i x_i
x_i^\top \hat r_i^2)(X^\top X)^{-1}$
is consistent because $\hat r_i \to r_i$ in probability
(by consistency of $\hat\beta$) and the continuous mapping
theorem applies under the finite moment conditions.
\hfill$\square$
 
\subsection{Proof of Proposition~\ref{prop:glm}}
\label{proof:glm}
 
Let $s_i(\beta) = \partial \log p(f(z_i)\mid x_i^\top\beta)/
\partial\beta$ denote the score contribution for observation
$i$. The GLM estimator solves $\sum_i s_i(\hat\beta) = 0$.
Under the stated regularity conditions (i.i.d.\ sample, twice
differentiable log-likelihood, full-rank information),
a standard M-estimator argument gives consistency of
$\hat\beta$ for $\beta^*$ and the sandwich CLT
$\sqrt{n}(\hat\beta - \beta^*) \dlim \mathcal{N}(0, J^{-1}MJ^{-1})$,
where $J = -\E[\partial s_i(\beta^*)/\partial\beta^\top]$
is the negative expected Hessian and
$M = \E[s_i(\beta^*)s_i(\beta^*)^\top]$ is the outer product
of scores. Because the response $f(z_i)$ is a deterministic
function rather than a genuine draw from the posited family,
$M \neq J$ in general (the information equality fails), and the standard formula $(X^\top \hat{\Lambda} X)^{-1}$ is therefore incorrect. The sandwich estimator
$\hat J^{-1}\hat M \hat J^{-1}$ is consistent for
$J^{-1}MJ^{-1}$ by the same argument as
Section~\ref{proof:ols}. \hfill$\square$
 
\subsection{Proof of Proposition~\ref{prop:shift}}
\label{proof:shift}
 
From~\eqref{eq:two-estimates} and~\eqref{eq:diff-linear},
$\hat\phi - \hat\phi' = (a - \tilde a)^\top F$, where
$a^\top = e_j^\top(X^\top X)^{-1}X^\top$ and
$\tilde a^\top = e_k^\top(\tilde X^\top\tilde X)^{-1}
\tilde X^\top$ are deterministic functions of the covariate
matrices. The variance across repeated samples from
$\mathbb{P}$ is
\[
  \Var\!\left((a-\tilde a)^\top F\right)
    = a^\top\Cov(F)\,a
      - 2\,a^\top\Cov(F)\,\tilde a
      + \tilde a^\top\Cov(F)\,\tilde a.
\]
Substituting $a^\top = e_j^\top(X^\top X)^{-1}X^\top$ and
$\Cov(F) = \Cov(f(Z))$, the first term equals the $j$-th
diagonal element of $(X^\top X)^{-1}\Cov(X f(Z))
(X^\top X)^{-1}$, which is the sandwich variance of
$\hat\phi$ from the restricted model. The third term is
the analogous sandwich variance of $\hat\phi'$. The cross
term is
\[
  a^\top\Cov(F)\,\tilde a
    = e_j^\top(X^\top X)^{-1}
      \Cov(X_j f(Z),\,\tilde X_k f(Z))
      (\tilde X^\top\tilde X)^{-1}e_k,
\]
which is non-zero whenever $\Cov(X_j f(Z), \tilde X_k
f(Z)) \neq 0$, that is, whenever the two covariate-response
products are correlated across the sample. Since both
models use the same $f(z_i)$, this correlation is
generically non-zero. The sample estimators stated in
Proposition~\ref{prop:shift} are the plug-in versions of
these population quantities, and their consistency follows
from Assumption~\ref{ass:sampling} and the law of large
numbers. \hfill$\square$
 
\section{Aligning Proxy Discrimination and Conditional Demographic Parity with Regulatory Frameworks} \label{app:PD_CDP}

\subsection{Proxy discrimination} \label{app:pd}

\emph{Underlying criterion}: observed rating variables should not serve as statistical substitutes for a protected attribute. When a variable correlates with $A$ and absorbs some of $A$'s predictive power for $P$, removing $A$ from the model does not eliminate its influence on prices. Instead, that influence passes through the correlated variable.

Proxy discrimination has a straightforward econometric structure. It can be viewed as a special case of omitted-variable bias in which the omitted variable is the protected characteristic $A$. Let $P$ denote the pricing outcome, $X_\ell$ a set of legitimate rating factors, and $W$ a set of additional (potentially non-traditional) variables used in the model. When $A$ is excluded from a regression of $P$ on $X_\ell$ and $W$, the estimated coefficients $\hat{\phi}$ on $W$ will be biased to the extent that $W$ is correlated with $A$ and $A$ has a causal effect on $P$. Re-introducing $A$ into the regression and observing a shift in $\hat{\phi}$ therefore provides direct evidence that $W$ is acting as a proxy for the protected attribute \citep{pope-sydnor-2011-implementing, lindholm-etal-2022-discrimination}.

Draft regulation by the \citet[][\S8]{doradoi-2023-concerning} (CO DOI) operationalises this insight, though without naming its connection to omitted-variable bias. The procedure compares two regression models:
\begin{enumerate}
    \item Fit two models for the outcome of interest. The first regresses the outcome on (a)~traditional underwriting factors, (b)~non-traditional variables used in the pricing or approval decision, and (c)~race/ethnicity indicator variables. The second regresses the outcome on sets (a) and (b) only, omitting (c). CO DOI recommend logistic regression for binary approval outcomes and linear regression for premium rates per \$1{,}000 of face amount. Race/ethnicity is estimated via BIFSG (Bayesian Improved First Name
Surname Geocoding, a method that infers the probability of an
individual's race or ethnicity from their first name, surname, and
residential geography) \citep{voicu2018using} when not directly observed.

    \item Examine the coefficients on the non-traditional variables (set b) across the two models. Under CO DOI's draft regulation, any variable whose coefficient shifts between the two models is taken as evidence that the variable may contribute to unfair discrimination.
\end{enumerate}

The same logic was applied by the \citet{ftc-2007-creditBased} to investigate whether credit-based insurance scores proxy for race/ethnicity and neighbourhood income in automobile claims-cost models, giving the CO DOI approach a precedent in federal regulatory practice.

A complementary approach, proposed by \citet{duPreez-etal-2024-bias}, builds a classifier that predicts $A$ from the non-protected variables. Strong predictive accuracy implies that those variables can substitute for $A$, and thus that proxy discrimination is plausible. This approach is less model-dependent than the coefficient-shift method but yields a weaker conclusion: it establishes potential for proxy discrimination rather than measuring its extent in a specific pricing model.

The \citet{nydfs-2024-proposed} (NY DFS) proposes a related concept under the label ‘drivers of disparity’. NY DFS requires insurers to identify variables that “cause differences in outcomes for protected classes relative to control groups” (§17.vi) and to demonstrate that observed characteristics do not “serve as a proxy for any protected classes that may result in unfair or unlawful discrimination” (§11). We interpret this as aligning with the proxy discrimination framework described above. In particular, the requirement to assess whether variables act as proxies can be operationalised through a coefficient-shift analysis, in which changes in estimated effects after controlling for the protected attribute provide evidence of proxy behaviour. Under this interpretation, the CO DOI and NY DFS approaches are substantively aligned despite their different terminology.

\subsection{Conditional Demographic Parity} 
\label{app:cdp}
\emph{Underlying criterion}: after accounting for legitimate risk differences, members of different groups should face the same distribution of prices. CDP is the conditional analogue of demographic parity (DP or independence criterion). It permits price differences explained by $X_\ell$ but not residual differences attributable to group membership alone \citep{xin-huang-2023-anti}.

Formally, CDP holds for a pricing outcome $P$ if
$$\Pr\!\left(P = p \mid X_\ell = x_\ell, A = a\right) =
  \Pr\!\left(P = p \mid X_\ell = x_\ell, A = b\right)$$
for all $p$ and all values $X_\ell$ of the legitimate rating
factors $X_\ell$.

In its strict form, CDP requires equality of the entire conditional
distribution of prices across groups. In practice, and throughout
this paper, we operationalise a mean-based relaxation: we test
whether the conditional expectation $\mathbb{E}[P = p \mid X_{\ell} = x_\ell, A = a]$
equals $\mathbb{E}[P = p \mid X_{\ell} = x_\ell, A = b]$, which is the quantity
identified by the regression coefficient $\beta$. This is the
formulation adopted in the major regulatory proposals.

A ratio relaxation, relaxed conditional demographic parity (RCDP) at tolerance level $\tau$, permits a bounded disparity:
$$\tau \;\leq\; \frac{\Pr\!\left(P = p \mid X_{\ell} = x_\ell, A = a\right)}
                     {\Pr\!\left(P = p \mid X_{\ell} = x_\ell, A = b\right)}
              \;\leq\; \frac{1}{\tau}$$
The $\tau$ threshold corresponds directly to the adverse impact ratio used in employment discrimination law, so RCDP is a natural operationalisation for regulators already working within that framework. When the conditioning on legitimate rating factors $X_\ell$ is removed, the criterion reduces to demographic parity (DP), which requires equality in the marginal distribution of outcomes across groups, without adjusting for differences in underlying risk.

In its strict form, RCDP requires this ratio to lie within
$(\tau, 1/\tau)$ for all $p$ and all values $x_\ell$ of
the legitimate rating factors, which is a pointwise condition that
is generally untestable from finite data. In practice, the regression imposes the assumption that the
conditional mean gap is constant across all values of
$X_\ell$. Under this assumption, the coefficient
estimates a single gap that applies uniformly after
controlling for $X_\ell$, rather than a pointwise condition
at each $x_\ell$. 

The CO DOI draft regulation \citep[][\S\S\,6--7]{doradoi-2023-concerning} encodes a two-step test that we interpret as operationalising CDP and RCDP, respectively, though the regulation does not use this terminology.\footnote{The mapping from the CO DOI procedure to CDP and RCDP is our own interpretation, based on the algebraic structure of the proposed tests.}
\begin{enumerate}
    \item Regress the outcome variable on a set of race/ethnicity indicator variables and, optionally, a limited set of approved control variables. CO DOI recommend logistic regression for binary approval outcomes and linear regression for premium rates per \$1{,}000 of face amount; unobserved race/ethnicity is estimated using BIFSG.

    \item \textbf{First test (CDP):} assess whether the race/ethnicity indicators are jointly or individually significant at the 5\% level. A model passes this test if none of the race/ethnicity coefficients is statistically significant.

    \item \textbf{Second test (RCDP):} for any significant race/ethnicity indicator, assess whether its estimated effect exceeds the regulatory tolerance. CO DOI set this at 5 percentage points for approval rates and 5\% of the mean premium for price outcomes. A model passes this test if all significant effects are below these thresholds.
\end{enumerate}

When no control variables are included in the regression, the same two tests instead probe the unconditional criteria, demographic independence and the raw demographic impact ratio, respectively. Including controls shifts the comparison to the conditional (CDP/RCDP) setting.

A fundamental limitation of the CO DOI procedure is that its pass/fail rule conflates statistical significance with practical importance in a way that is sensitive to sample size. In large datasets, even negligible disparities will be flagged as statistically significant, while in small datasets, substantively large disparities may go undetected. The TOST framework addresses this limitation by treating the tolerance margin as the primary inferential target.

Several additional metrics proposed by the NY DFS \citep{nydfs-2024-proposed} are consistent with testing for independence (demographic parity) or CDP; these include the Adverse Impact Ratio, Denials Odds Ratio, Standardised Mean Differences, and $z$/$t$-tests.

Each metric can be computed against observed outcomes (corresponding to DP) or against regression residuals that control for $X_\ell$ (corresponding to CDP). The NY DFS guidance does not resolve which mode is required, leaving the choice to the insurer. Our framework addresses this by tying the conditioning set to the pre-specified list of legitimate rating factors $X_\ell$ .

\section{Supplementary Empirical Results}
\label{app:empirical}
 
\subsection{CDP regression: classical versus HC3 standard
errors, all 34 companies}
\label{app:empirical:se}
 
\begin{center}
\begin{longtable}{lrrrrr}
\caption{CDP regression: classical versus HC3 standard errors
  for the minority coefficient $\hat\beta_{A,k}$, all 34 Illinois
  companies. $\rho = \mathrm{SE}_{\mathrm{HC3}} /
  \mathrm{SE}_{\mathrm{classical}}$; $R^2$ is the fit of the
  linear audit model. Sorted by $\rho$ (largest first).}
\label{tab:se-full} \\
\hline
\textbf{Company} & $\hat\beta_A$ & $\mathrm{SE}_{\mathrm{cl}}$ &
  $\mathrm{SE}_{\mathrm{HC3}}$ & $\rho$ & $R^2$ \\
\hline
\endfirsthead
\multicolumn{6}{c}{\tablename\ \thetable\, continued.} \\
\hline
\textbf{Company} & $\hat\beta_A$ & $\mathrm{SE}_{\mathrm{cl}}$ &
  $\mathrm{SE}_{\mathrm{HC3}}$ & $\rho$ & $R^2$ \\
\hline
\endhead
\hline
\endlastfoot
Economy Preferred Ins Co          & +0.308 & 0.017 & 0.030 & 1.775 & 0.476 \\
Metropolitan Cas Ins Co           & +0.308 & 0.017 & 0.029 & 1.679 & 0.482 \\
Farmers Automobile Ins Assoc      & +0.294 & 0.024 & 0.040 & 1.660 & 0.675 \\
Travelers Home \& Marine Ins Co   & +0.209 & 0.015 & 0.020 & 1.330 & 0.546 \\
Travelers Commercial Ins Co       & +0.209 & 0.015 & 0.020 & 1.327 & 0.546 \\
Owners Ins Co                     & +0.314 & 0.020 & 0.025 & 1.246 & 0.554 \\
Country Mut Ins Co                & +0.260 & 0.017 & 0.021 & 1.192 & 0.349 \\
Country Pref Ins Co               & +0.258 & 0.017 & 0.020 & 1.190 & 0.353 \\
Metropolitan Prop \& Cas Ins Co   & +0.356 & 0.028 & 0.032 & 1.144 & 0.475 \\
Allstate Fire \& Cas Ins Co       & +0.215 & 0.016 & 0.019 & 1.123 & 0.437 \\
Metropolitan Grp Prop \& Cas Ins Co & +0.280 & 0.022 & 0.025 & 1.121 & 0.414 \\
Allstate Ind Co                   & +0.318 & 0.022 & 0.024 & 1.094 & 0.507 \\
Government Employees Ins Co       & +0.165 & 0.013 & 0.014 & 1.084 & 0.342 \\
Geico Gen Ins Co                  & +0.165 & 0.013 & 0.014 & 1.084 & 0.342 \\
Erie Ins Exch                     & +0.221 & 0.015 & 0.017 & 1.075 & 0.480 \\
Erie Ins Co                       & +0.221 & 0.015 & 0.016 & 1.072 & 0.480 \\
Geico Ind Co                      & +0.172 & 0.013 & 0.014 & 1.057 & 0.369 \\
First Liberty Ins Corp            & +0.182 & 0.013 & 0.014 & 1.046 & 0.504 \\
Liberty Mut Fire Ins Co           & +0.182 & 0.013 & 0.014 & 1.042 & 0.502 \\
Safeco Ins Co Of IL               & +0.141 & 0.010 & 0.010 & 1.019 & 0.457 \\
United Serv Automobile Assn       & +0.095 & 0.010 & 0.010 & 0.995 & 0.324 \\
USAA Cas Ins Co                   & +0.090 & 0.010 & 0.009 & 0.955 & 0.300 \\
Illinois Farmers Ins Co           & +0.171 & 0.018 & 0.017 & 0.940 & 0.359 \\
USAA Gen Ind Co                   & +0.114 & 0.010 & 0.009 & 0.907 & 0.401 \\
Garrison Prop \& Cas Ins Co       & +0.087 & 0.012 & 0.010 & 0.873 & 0.244 \\
American Family Mut Ins Co        & +0.151 & 0.012 & 0.011 & 0.870 & 0.502 \\
Progressive Direct Ins Co         & +0.157 & 0.013 & 0.011 & 0.870 & 0.417 \\
Progressive Universal Ins Co      & +0.150 & 0.012 & 0.011 & 0.869 & 0.419 \\
Progressive Northern Ins Co       & +0.213 & 0.018 & 0.015 & 0.835 & 0.417 \\
American Standard Ins Co of WI    & +0.137 & 0.013 & 0.010 & 0.801 & 0.431 \\
Geico Cas Co                      & +0.159 & 0.012 & 0.009 & 0.753 & 0.424 \\
State Farm Mut Auto Ins Co        & +0.222 & 0.017 & 0.013 & 0.753 & 0.450 \\
State Farm Fire \& Cas Co         & +0.222 & 0.017 & 0.013 & 0.753 & 0.450 \\
Trumbull Ins Co                   & +0.130 & 0.013 & 0.009 & 0.685 & 0.405 \\
\end{longtable}
\end{center}
 
\subsection{Full CDP audit results, all 34 companies}
\label{app:empirical:cdp}
 
\begin{center}
\begin{longtable}{lrrrr}
\caption{Full CDP audit results, all 34 Illinois companies.
  $\beta_A$ = conditional log-premium gap (HC3 SEs);
  Ratio $= e^{\beta_A}$; Gap = implied dollar difference at mean
  premium (\$370). TOST: $\delta = 5\%$ of mean, $\tau = 0.80$,
  $\alpha = 0.10$. All companies fail.}
\label{tab:cdp-full} \\
\hline
\textbf{Company} & \textbf{Gap (\$)} & \textbf{Ratio} &
  \textbf{90\% CI for $\beta_A$} & \textbf{Dec.} \\
\hline
\endfirsthead
\multicolumn{5}{c}{\tablename\ \thetable\ continued.} \\
\hline
\textbf{Company} & \textbf{Gap (\$)} & \textbf{Ratio} &
  \textbf{90\% CI for $\beta_A$} & \textbf{Dec.} \\
\hline
\endhead
\hline
\endlastfoot
Metropolitan Prop \& Cas Ins Co   & \$158 & 1.427 &
  $[+0.315,\;+0.397]$ & FAIL \\
Allstate Ind Co                   & \$138 & 1.374 &
  $[+0.287,\;+0.349]$ & FAIL \\
Owners Ins Co                     & \$137 & 1.369 &
  $[+0.282,\;+0.346]$ & FAIL \\
Economy Preferred Ins Co          & \$134 & 1.361 &
  $[+0.269,\;+0.346]$ & FAIL \\
Metropolitan Cas Ins Co           & \$134 & 1.361 &
  $[+0.271,\;+0.345]$ & FAIL \\
Farmers Automobile Ins Assoc      & \$127 & 1.342 &
  $[+0.243,\;+0.345]$ & FAIL \\
Metropolitan Grp Prop \& Cas Ins Co & \$120 & 1.323 &
  $[+0.249,\;+0.311]$ & FAIL \\
Country Mut Ins Co                & \$110 & 1.297 &
  $[+0.234,\;+0.287]$ & FAIL \\
Country Pref Ins Co               & \$109 & 1.295 &
  $[+0.233,\;+0.284]$ & FAIL \\
Erie Ins Exch                     & \$ 92 & 1.248 &
  $[+0.200,\;+0.242]$ & FAIL \\
State Farm Fire \& Cas Co         & \$ 92 & 1.248 &
  $[+0.205,\;+0.238]$ & FAIL \\
State Farm Mut Auto Ins Co        & \$ 92 & 1.248 &
  $[+0.205,\;+0.238]$ & FAIL \\
Erie Ins Co                       & \$ 91 & 1.247 &
  $[+0.200,\;+0.242]$ & FAIL \\
Allstate Fire \& Cas Ins Co       & \$ 89 & 1.240 &
  $[+0.191,\;+0.239]$ & FAIL \\
Progressive Northern Ins Co       & \$ 88 & 1.238 &
  $[+0.194,\;+0.232]$ & FAIL \\
Travelers Home \& Marine Ins Co   & \$ 86 & 1.233 &
  $[+0.184,\;+0.235]$ & FAIL \\
Travelers Commercial Ins Co       & \$ 86 & 1.232 &
  $[+0.183,\;+0.234]$ & FAIL \\
Liberty Mut Fire Ins Co           & \$ 74 & 1.199 &
  $[+0.164,\;+0.200]$ & FAIL \\
First Liberty Ins Corp            & \$ 74 & 1.199 &
  $[+0.164,\;+0.199]$ & FAIL \\
Geico Ind Co                      & \$ 70 & 1.188 &
  $[+0.154,\;+0.190]$ & FAIL \\
Illinois Farmers Ins Co           & \$ 69 & 1.186 &
  $[+0.149,\;+0.193]$ & FAIL \\
Geico Gen Ins Co                  & \$ 66 & 1.179 &
  $[+0.147,\;+0.183]$ & FAIL \\
Government Employees Ins Co       & \$ 66 & 1.179 &
  $[+0.147,\;+0.183]$ & FAIL \\
Geico Cas Co                      & \$ 64 & 1.173 &
  $[+0.148,\;+0.171]$ & FAIL \\
Progressive Direct Ins Co         & \$ 63 & 1.170 &
  $[+0.143,\;+0.171]$ & FAIL \\
American Family Mut Ins Co        & \$ 60 & 1.163 &
  $[+0.137,\;+0.164]$ & FAIL \\
Progressive Universal Ins Co      & \$ 60 & 1.162 &
  $[+0.136,\;+0.163]$ & FAIL \\
Safeco Ins Co Of IL               & \$ 56 & 1.152 &
  $[+0.128,\;+0.154]$ & FAIL \\
American Standard Ins Co of WI    & \$ 54 & 1.147 &
  $[+0.124,\;+0.150]$ & FAIL \\
Trumbull Ins Co                   & \$ 51 & 1.138 &
  $[+0.119,\;+0.141]$ & FAIL \\
USAA Gen Ind Co                   & \$ 44 & 1.120 &
  $[+0.102,\;+0.126]$ & FAIL \\
United Serv Automobile Assn       & \$ 37 & 1.099 &
  $[+0.082,\;+0.107]$ & FAIL \\
USAA Cas Ins Co                   & \$ 35 & 1.095 &
  $[+0.079,\;+0.102]$ & FAIL \\
Garrison Prop \& Cas Ins Co       & \$ 34 & 1.091 &
  $[+0.074,\;+0.100]$ & FAIL \\
\end{longtable}
\end{center}
 
\subsection{Full proxy discrimination results, all 34 companies}
\label{app:empirical:proxy}
 
Both standard errors use the score-based sandwich estimator
appropriate for deterministic responses
(Propositions~\ref{prop:ols} and~\ref{prop:shift}).
$\mathrm{SE}_{\mathrm{ind}}$ sums the individual
score-based sandwich variances treating the two models as
independent; $\mathrm{SE}_{\mathrm{full}}$ applies the
cross-covariance correction of Proposition~\ref{prop:shift}.
$^{*}$ denotes significance at 5\% (two-sided,
$|z| > 1.645$). Sorted by ratio (smallest first).
 
\begin{center}
\begin{longtable}{lrrrrrr}
\caption{Proxy discrimination test: coefficient shift on log
  state risk when minority flag is added, all 34 Illinois
  companies. All standard errors use the score-based sandwich.
  Ratio $= \mathrm{SE}_{\mathrm{full}}/\mathrm{SE}_{\mathrm{ind}}$.
  Mean ratio $= 0.082$; range $[0.080, 0.085]$.}
\label{tab:proxy-full} \\
\hline
\textbf{Company} & \textbf{Shift} &
  $\mathrm{SE}_{\mathrm{ind}}$ &
  $\mathrm{SE}_{\mathrm{full}}$ & \textbf{Ratio} &
  $z_{\mathrm{ind}}$ & $z_{\mathrm{full}}$ \\
\hline
\endfirsthead
\multicolumn{7}{c}{\tablename\ \thetable\ continued.} \\
\hline
\textbf{Company} & \textbf{Shift} &
  $\mathrm{SE}_{\mathrm{ind}}$ &
  $\mathrm{SE}_{\mathrm{full}}$ & \textbf{Ratio} &
  $z_{\mathrm{ind}}$ & $z_{\mathrm{full}}$ \\
\hline
\endhead
\hline
\endlastfoot
  Garrison Prop \& Cas Ins Co                   & $7.8$\% & $0.0250$ & $0.0020$ & $0.080$ & $0.32$ & $4.00$\(^{*}\) \\
  USAA Cas Ins Co                               & $11.5$\% & $0.0229$ & $0.0018$ & $0.080$ & $0.37$ & $4.53$\(^{*}\) \\
  Geico Ind Co                                  & $10.2$\% & $0.0276$ & $0.0022$ & $0.081$ & $0.58$ & $7.09$\(^{*}\) \\
  American Standard Ins Co of WI                & $8.0$\% & $0.0285$ & $0.0023$ & $0.081$ & $0.44$ & $5.46$\(^{*}\) \\
  Geico Cas Co                                  & $8.5$\% & $0.0249$ & $0.0020$ & $0.081$ & $0.59$ & $7.25$\(^{*}\) \\
  Government Employees Ins Co                   & $11.1$\% & $0.0245$ & $0.0020$ & $0.081$ & $0.62$ & $7.65$\(^{*}\) \\
  Progressive Direct Ins Co                     & $7.2$\% & $0.0268$ & $0.0022$ & $0.081$ & $0.54$ & $6.65$\(^{*}\) \\
  Geico Gen Ins Co                              & $11.1$\% & $0.0245$ & $0.0020$ & $0.081$ & $0.62$ & $7.65$\(^{*}\) \\
  Safeco Ins Co Of IL                           & $10.5$\% & $0.0251$ & $0.0020$ & $0.081$ & $0.52$ & $6.42$\(^{*}\) \\
  Progressive Universal Ins Co                  & $7.2$\% & $0.0261$ & $0.0021$ & $0.081$ & $0.53$ & $6.51$\(^{*}\) \\
  United Serv Automobile Assn                   & $11.1$\% & $0.0228$ & $0.0018$ & $0.081$ & $0.38$ & $4.75$\(^{*}\) \\
  USAA Gen Ind Co                               & $9.0$\% & $0.0247$ & $0.0020$ & $0.081$ & $0.43$ & $5.25$\(^{*}\) \\
  First Liberty Ins Corp                        & $9.5$\% & $0.0257$ & $0.0021$ & $0.082$ & $0.65$ & $7.95$\(^{*}\) \\
  Liberty Mut Fire Ins Co                       & $9.5$\% & $0.0256$ & $0.0021$ & $0.082$ & $0.66$ & $7.99$\(^{*}\) \\
  Illinois Farmers Ins Co                       & $8.7$\% & $0.0259$ & $0.0021$ & $0.082$ & $0.61$ & $7.44$\(^{*}\) \\
  Trumbull Ins Co                               & $6.6$\% & $0.0229$ & $0.0019$ & $0.082$ & $0.52$ & $6.39$\(^{*}\) \\
  American Family Mut Ins Co                    & $7.7$\% & $0.0261$ & $0.0021$ & $0.082$ & $0.53$ & $6.53$\(^{*}\) \\
  Allstate Fire \& Cas Ins Co                   & $9.7$\% & $0.0262$ & $0.0022$ & $0.082$ & $0.76$ & $9.18$\(^{*}\) \\
  State Farm Mut Auto Ins Co                    & $7.4$\% & $0.0249$ & $0.0021$ & $0.083$ & $0.82$ & $9.91$\(^{*}\) \\
  State Farm Fire \& Cas Co                     & $7.4$\% & $0.0254$ & $0.0021$ & $0.083$ & $0.81$ & $9.75$\(^{*}\) \\
  Erie Ins Co                                   & $9.3$\% & $0.0251$ & $0.0021$ & $0.083$ & $0.81$ & $9.80$\(^{*}\) \\
  Country Pref Ins Co                           & $10.6$\% & $0.0243$ & $0.0020$ & $0.083$ & $0.98$ & $11.83$\(^{*}\) \\
  Travelers Home \& Marine Ins Co               & $14.6$\% & $0.0248$ & $0.0020$ & $0.083$ & $0.78$ & $9.46$\(^{*}\) \\
  Erie Ins Exch                                 & $9.3$\% & $0.0245$ & $0.0020$ & $0.083$ & $0.84$ & $10.08$\(^{*}\) \\
  Travelers Commercial Ins Co                   & $14.6$\% & $0.0246$ & $0.0020$ & $0.083$ & $0.79$ & $9.51$\(^{*}\) \\
  Progressive Northern Ins Co                   & $7.0$\% & $0.0256$ & $0.0021$ & $0.083$ & $0.77$ & $9.33$\(^{*}\) \\
  Country Mut Ins Co                            & $10.6$\% & $0.0244$ & $0.0020$ & $0.083$ & $0.98$ & $11.87$\(^{*}\) \\
  Metropolitan Grp Prop \& Cas Ins Co           & $9.3$\% & $0.0268$ & $0.0022$ & $0.083$ & $0.96$ & $11.59$\(^{*}\) \\
  Owners Ins Co                                 & $11.4$\% & $0.0255$ & $0.0022$ & $0.084$ & $1.14$ & $13.48$\(^{*}\) \\
  Metropolitan Cas Ins Co                       & $15.5$\% & $0.0258$ & $0.0022$ & $0.084$ & $1.10$ & $13.20$\(^{*}\) \\
  Allstate Ind Co                               & $10.5$\% & $0.0277$ & $0.0023$ & $0.084$ & $1.06$ & $12.60$\(^{*}\) \\
  Economy Preferred Ins Co                      & $16.5$\% & $0.0248$ & $0.0021$ & $0.084$ & $1.15$ & $13.71$\(^{*}\) \\
  Metropolitan Prop \& Cas Ins Co               & $12.5$\% & $0.0245$ & $0.0021$ & $0.085$ & $1.34$ & $15.73$\(^{*}\) \\
  Farmers Automobile Ins Assoc                  & $22.2$\% & $0.0253$ & $0.0022$ & $0.085$ & $1.07$ & $12.59$\(^{*}\) \\
\hline
\multicolumn{4}{l}{Mean ratio $= 0.082$ \quad Range $= [0.080, 0.085]$}
  & & & \\
\hline
\end{longtable}
\end{center}
 
\end{document}